\documentclass{article}
\usepackage[main, preprint]{neurips_2026}
\usepackage{float}
\usepackage[utf8]{inputenc}
\usepackage[T1]{fontenc}
\usepackage{amsmath,amssymb,amsthm}
\usepackage{mathtools}
\usepackage{bm}
\usepackage{graphicx}
\usepackage{booktabs}
\usepackage{hyperref}
\usepackage{url}
\usepackage{xcolor}

\newcommand{\indep}{\mathrel{\perp\!\!\!\perp}}

\newtheorem{lemma}{Lemma}[section]
\newtheorem{theorem}{Theorem}[section]

\newtheorem{proposition}{Proposition}[section]

\setcitestyle{semicolon}
\title{FlowSDR: Sufficient Dimension Reduction via Conditional Normalizing Flows}

\author{%
    Yuexiao Dong\\
  Temple University \\
  Philadelphia, PA 19102\\
 \And
    Kenichiro Mcalinn \\
  Temple University \\
  Philadelphia, PA 19102\\
  \And
   Edoardo  Airoldi \\
  Temple University \\
  Philadelphia, PA 19102\\
  \And
    Lei Li \\
  LunarAI LLC \\
  Newark, DE 19702 \\
  }

\begin{document}
\maketitle

\begin{abstract}
Sufficient dimension reduction (SDR) seeks a low-dimensional linear projection $B^\top X$
 of predictors that preserves the conditional distribution of the response $Y$.
  Existing methods target this conditional distribution indirectly, via inverse moments, local forward regression, or neural ensemble regression. We propose FlowSDR, a likelihood-based framework that jointly learns the projection $B^\top X$ and the conditional density $f(Y |B^\top X)$  by maximizing a conditional log-likelihood, with the density parameterized by monotone rational-quadratic spline flows. The estimator is Fisher consistent under the SDR model, and its sample  objective admits a population interpretation in terms of mutual information. As a complementary model within the same likelihood framework, we introduce the neural Gaussian SDR, a heteroscedastic conditional Gaussian model whose mean and variance are parameterized by
shared neural-network functions of $B^\top X$. 
In simulations spanning Gaussian errors, heavy-tailed distributions, two-component mixtures, and settings with tail behavior not captured by mean–variance structure, FlowSDR recovers the central subspace more accurately than existing SDR methods and the neural Gaussian SDR baseline. We further validate these advantages on a face-age prediction task using the UTKFace dataset.
\end{abstract}

\section{Introduction and related work}

High-dimensional predictors are ubiquitous in modern applications, yet the information relevant to predicting a response often lies in a low-dimensional subspace. Sufficient dimension reduction (SDR) \citep{li1991sir,cook2007dimension,li2018sdr,cook2026foundations} provides a principled framework for identifying such structure by seeking a low-dimensional projection of the predictors that preserves the conditional distribution of the response. Formally,
given a predictor \(X \in \mathbb{R}^p\) and a  response \(Y \in \mathbb{R}\),
 SDR aims to find a matrix  $B \in \mathbb{R}^{p \times d}$ with  $d < p$, such that
\begin{equation}
\label{eq:sdr}
  Y \indep X \mid B^{\top}X,
\end{equation}
where $\indep$ means statistical independence. Model (\ref{eq:sdr}) implies that 
the projected feature \(B^{\top}X \in \mathbb{R}^d\) is sufficient for modeling the conditional distribution of \(Y\). The subspace spanned by the columns of $B$ satisfying (\ref{eq:sdr}) is called a dimension
reduction subspace. Among all such
subspaces, their intersection defines the central subspace, which is
the minimal subspace that preserves the conditional distribution of
$Y$ given $X$.
Sufficient dimension reduction has played critical roles in various statistical and machine learning problems, such as  conditional density estimation \citep{tangkaratt2014conditional}, function approximation \citep{zhang2019learning}, classification \citep{meng2020sufficient,zhang2020maximum},  survival analysis \citep{ding2020double}, online learning 
\citep{cai2020online,artemiou2021realtime}, 
interaction detection \citep{tang2020highdim}, and causal inference \citep{luo2020matching,luo2022efficient}. These applications highlight the importance of identifying low-dimensional structures that retain predictive information while mitigating the curse of dimensionality.

Classical sufficient dimension reduction methods, including sliced inverse regression (SIR)
\citep{li1991sir}
and sliced average variance estimation (SAVE) 
\citep{cook1991save}, estimate the low-dimensional central subspace through moment-based or inverse-regression techniques.
By combining local linear regression with predictor dimension reduction, minimum average variance estimation (MAVE) \citep{xia2002mave} and its variation central subspace MAVE (csMAVE) \citep{wang2008sliced} achieve sufficient dimension reduction via forward regression.
Despite their success, traditional SDR methods face several limitations. Inverse-regression approaches such as SIR and SAVE rely on restrictive  assumptions about the predictor distribution, which may not hold in applications. Forward-regression methods like MAVE and csMAVE improve flexibility but can be sensitive to bandwidth selection. 
This line of work has been extended through machine learning techniques, including local SDR \citep{wu2008localized}, principal support vector machines \citep{li2011psvm}, RKHS-based methods \citep{fukumizu2004dimensionality,fukumizu2009kernel,hsing2009rkhs}, and nonlinear SDR \citep{yeh2008nonlinear,li2017nonlinear}. More recently, neural-network-based SDR methods have been proposed to further enhance modeling flexibility \citep{kapla2022fusing,liang2022nonlinear,chen2024deep,yang2025golden,tang2026belted}.

Despite the increased flexibility offered by these approaches, most existing SDR methods focus primarily on estimating the central subspace, rather than directly modeling the conditional distribution \(f_{Y | X}(y | x)\). As a result, they may fail to fully capture complex relationships between predictors and response, particularly in settings involving heavy-tailed errors, heteroscedasticity, or multimodal conditional structures. 

To address these limitations, it is desirable to develop an SDR framework that directly targets the conditional distribution while preserving the dimension reduction structure. In this paper, we propose a novel SDR approach based on normalizing flows. Normalizing flows \citep{papamakarios2017maf,dinh2017realnvp,papamakarios2021normalizing} provide a flexible yet tractable class of invertible transformations that map complex distributions to simple base distributions. As a result, they can approximate conditional distributions \(f_{Y | X}(y | x)\) with high fidelity, capturing non-Gaussian features such as skewness, multimodality, and heteroscedasticity. Our contributions are summarized as follows:

\begin{enumerate}
\setlength{\itemsep}{0.4em}
\setlength{\parskip}{0pt}
\setlength{\parsep}{0pt}
\setlength{\leftmargin}{0pt}
\item[1.]
We propose FlowSDR, a likelihood-based sufficient dimension reduction method that jointly learns a low-dimensional representation $B^\top X$ and the conditional density $f(Y | B^\top X)$ using conditional normalizing flows. We establish a population-level interpretation of the proposed objective through an information index closely related to mutual information, show that the resulting population target is Fisher consistent under the SDR model, and prove consistency of the sample-level estimator.

\item[2.]
As a complementary likelihood-based baseline, we introduce neural Gaussian SDR, a heteroscedastic conditional Gaussian model that clarifies the role of flexible conditional density estimation within the SDR framework. We further show that the population-level target of neural Gaussian SDR is closely connected to MAVE,  a prominent classical SDR estimator. 

\item[3.]
Through simulations involving heteroscedasticity, heavy-tailed errors, multimodal responses, and tail-dependent conditional structures, we demonstrate that FlowSDR substantially improves central subspace recovery compared with existing SDR methods.

\item[4.]
We further validate the proposed framework on the UTKFace age prediction task, where FlowSDR achieves strong predictive and conditional density estimation performance while naturally enabling conformal prediction intervals.
\end{enumerate}

\section{FlowSDR}
\label{sec:flow sdr}

\subsection{Conditional flow model}

Let \(W = B^{\top}X \in \mathbb{R}^d\) denote the projected feature satisfying
 $ Y \indep X \mid W$. We model the 
 
 \noindent conditional distribution of \(Y \mid W=w\) using a conditional normalizing flow.

Let \(T_{\theta}(\,\cdot\,; w): \mathbb{R} \to \mathbb{R}\) be an invertible transformation indexed by the conditioning value \(W=w\). Define
\[
  Z = T_{\theta}(Y; W).
\]
The model assumes that, for each \(w\), the transformed response $Z$ has a fixed base density \(f_0\), typically the standard normal density:
\[
  Z \mid W=w \sim f_0 .
\]
Equivalently, all dependence of the conditional distribution of \(Y\) on \(w\) is represented through the transformation \(T_{\theta}(\,\cdot\,; w)\), while the base distribution itself is free of \(w\).
By the change-of-variables formula, the induced conditional density of \(Y\) given \(W=w\) is
\begin{equation}
\label{eq:flow model}
  f_{\theta}(y | w)
  =
  f_0\{T_{\theta}(y; w)\}
  \left|
  \frac{\partial T_{\theta}(y; w)}{\partial y}
  \right|.
\end{equation}

We parameterize $T_{\theta}(\,\cdot\,; w)$ using monotone rational-quadratic splines (RQS) \citep{durkan2019neural}. For each $w$, the transformation is defined by a piecewise rational-quadratic function whose parameters, which include bin widths, heights, and derivatives, are produced by a neural network. 

Specifically,  define knot sequences for fixed $w$ and fixed number of bins $K$ as
\[
y_0(w) < y_1(w) < \cdots < y_K(w),
\mbox{ and } 
z_0(w) < z_1(w) < \cdots < z_K(w).
\]
Let $s_k(w) > 0$ 
denote the derivatives at the knots. All quantities
$\{y_k(w), z_k(w), s_k(w)\}_{k=0}^K$ are
determined by a neural network $\phi_\theta(w)$,
typically by outputting unconstrained parameters that are transformed to
satisfy positivity and ordering constraints.

Let $\Delta y_k(w) = y_{k+1}(w) - y_k(w) $ and $\Delta z_k(w) = z_{k+1}(w) - z_k(w) $. 
For \(y \in [y_k(w), y_{k+1}(w)]\), define 
\[
r_k(w) = \frac{\Delta z_k(w)}{\Delta y_k(w)} \mbox{ and }  \xi(y,w) = \frac{y - y_k(w)}{\Delta y_k(w)} \in [0,1].
\]
Then the transformation is
\[
T_{\theta}(y; w)
=
z_k(w)
+
\Delta z_k(w)
\frac{
r_k(w)\,\xi(y,w)^2 + s_k(w)\,\xi(y,w)\big(1 - \xi(y,w)\big)
}{
r_k(w) + \big(s_{k+1}(w) + s_k(w) - 2r_k(w)\big)\,\xi(y,w)\big(1 - \xi(y,w)\big)
}.
\]
Under the constraints
$\Delta y_k(w) > 0$,  $\Delta z_k(w) > 0$, and $s_k(w) > 0$,
the transformation \(T_{\theta}(y; w)\) is strictly monotone in \(y\)
for each fixed \(w\), continuously differentiable, invertible, and admits
a closed-form derivative.
  This yields a flexible family of smooth, invertible transformations with tractable Jacobians.

  While general-purpose normalizing flows such as NICE \citep{dinh2014nice}, RealNVP \citep{dinh2017realnvp}, and MAF \citep{papamakarios2017maf} are widely used for multivariate density estimation, they are less well-suited for modeling univariate conditional distributions. In one dimension, coupling-based flows such as NICE and RealNVP reduce to affine transformations, while autoregressive flows (MAF) parameterize location–scale transformations of a base distribution, thereby limiting the expressiveness of the transformation in the response variable. In contrast, monotone spline-based flows directly parameterize a rich class of nonlinear  transformations, making them particularly effective for conditional density estimation in one dimension.

\subsection{Central subspace estimation}

Given an i.i.d. sample \(\{(X_i, Y_i)\}_{i=1}^n\), the log-likelihood  under the conditional flow model (\ref{eq:flow model}) can be written as 
\begin{equation*}
\begin{aligned}
\ell_n(B,\theta)
&=
\sum_{i=1}^n
\log f_\theta(Y_i | B^\top X_i)\\
&=
\sum_{i=1}^n
\left[
\log f_0\{T_\theta(Y_i;B^\top X_i)\}
+
\log \left|
\frac{\partial T_\theta(y;B^\top X_i)}{\partial y}
\Bigg|_{y=Y_i}
\right|
\right].
\end{aligned}
\end{equation*}
Let $\Theta$ denote the parameter space indexing the conditional normalizing flow family
$\{ f_\theta(Y | B^\top X) : \theta \in \Theta \}$.
For a fixed matrix \(B\), define the empirical profile log-likelihood
\begin{equation}
\label{eq:sample profile}
M_n(B)
=
\sup_{\theta\in\Theta}
\frac{1}{n}
\ell_n(B,\theta)
=
\sup_{\theta\in\Theta}\frac{1}{n}\sum_{i=1}^n\log f_\theta(Y_i| B^\top X_i).
\end{equation}
Our proposed method estimates the sufficient reduction subspace by solving
\begin{equation}
\label{eq:sample}
\hat{B}
=
\arg\max_{ B\in \mathbb{R}^{p \times d}}\
M_n(B) \quad \text{subject to } B^\top B = I_d.
\end{equation}
The constraint \(B^\top B = I_d\) in (\ref{eq:sample}) restricts \(B\) to lie on the Stiefel manifold, and the optimization is carried out using gradient-based methods on this manifold \citep{edelman1998geometry}. The column space of $\hat B$, or $\mathrm{span}(\hat B)$, is taken as the final FlowSDR estimator of the central subspace.

\subsection{Population-level targets}

 For notational convenience, we use $f(Y | X)$ for the conditional density of $Y$ given $X$, and $f(Y)$ as the marginal density of $Y$. The symbol $f$ is used generically to denote densities in each context.  For $A\in \mathbb{R}^{p \times q}$ with $q\le p$,
denote $f(Y | A^\top X)$ as the conditional density of $Y$ given $A^\top X$.
Consider the following information index $\mathcal{J}(A)$ 
\begin{equation}
\label{eq:mi}
   \mathcal{J}(A)={\rm E}\left\{ \log{f(Y | A^\top X)} \right\}.
\end{equation}
Since 
$\mathcal{J}(A)-{\rm E}\{\log f(Y)\}$
coincides with the mutual information between \(Y\) and \(A^\top X\), and \({\rm E}\{\log f(Y)\}\) does not depend on \(A\),
 maximizing \(\mathcal{J}(A)\) is equivalent to maximizing the mutual information between \(Y\) and \(A^\top X\). Moreover, \(\mathcal{J}(A)\) depends only on the subspace spanned
by the columns of \(A\). In particular, for any orthogonal matrix
\(C\in \mathbb{R}^{q \times q}\) satisfying \(C^\top C=I_q\), we have
\(\mathcal{J}(A)=\mathcal{J}(AC)\).
 
 Recent works such as  \cite{ nguyen2010estimating,belghazi2018mine, oord2018cpc}
propose variational lower bounds for mutual information that avoid explicit
density estimation by learning a critic function. In contrast, 
our approach directly models the conditional density $f(Y| B^\top X)$  under model (\ref{eq:sdr}) via conditional normalizing flows, enabling likelihood-based estimation of the central subspace.

At the population level, the target parameter of FlowSDR is defined as
\begin{equation}
\label{eq:population}
B_0
=
\arg\max_{B \in \mathbb{R}^{p \times d}} \mathcal{J}(B) \quad \text{subject to } B^\top B = I_d.
\end{equation}
As shown in Appendix \ref{sec:fisher info}, under model~(\ref{eq:sdr}), the above objective is Fisher consistent in the sense that the column space of any maximizer $B_0$ coincides with the true central subspace.

For a fixed projection matrix \(B \in \mathbb{R}^{p \times d}\), we model the conditional density $f(Y| B^\top X)$ using a conditional normalizing flow family $ \left\{
  f_\theta(Y | B^\top X): \theta \in \Theta
  \right\}$.
The corresponding population profile log-likelihood for a fixed \(B\) is
\begin{equation}
\label{eq:population profile}
  M(B)
  =
  \sup_{\theta \in \Theta}
  {\rm E}\left\{
  \log f_\theta(Y | B^\top X)
  \right\}.
\end{equation}
The criterion $M(B)$ can be viewed as a model-based approximation to $\mathcal{J}(B)$ in (\ref{eq:population}), where the conditional density is restricted to the flow family.
This defines the  intermediate target parameter 
\begin{equation}
\label{eq:population i}
B^\dagger
=
\arg\max_{B \in \mathbb{R}^{p \times d}} M(B) \quad \text{subject to } B^\top B = I_d.
\end{equation}
We leave the detailed sample-level asymptotic analysis of $\hat B$  to Appendix \ref{sec:flowsdr consistency}. The estimator $\hat B$ in \eqref{eq:sample} is the sample analogue of the
population profile target $B^\dagger$ in \eqref{eq:population i}. Meanwhile, the
population target $B_0$ in \eqref{eq:population} is Fisher consistent under the
SDR model \eqref{eq:sdr}. We next clarify the connection between $B^\dagger$ and
$B_0$.

Suppose the conditional flow family is sufficiently rich in the sense that, for
each fixed $B$,
\begin{equation}
\label{eq:rich class1}
\sup_{\theta \in \Theta}
  {\rm E}\left\{
\log f_\theta(Y | B^\top X)
\right\}
\approx
  {\rm E}\left\{
\log f(Y | B^\top X)
\right\}.
\end{equation}
Then the profile likelihood criterion $M(B)$ approximates the information index $\mathcal{J}(B)$.
Consequently, the maximizers are approximately equivalent up to subspace
identifiability, namely
$\operatorname{span}(B^\dagger) \approx \operatorname{span}(B_0)$.

\section{Neural Gaussian SDR}
\label{sec:ng sdr}

As a secondary contribution, we consider the following likelihood-based SDR model 
\begin{equation}
\label{eq:ng model}
Y \mid B^\top X = w
\sim
\mathcal{N}\big( \mu_\gamma(w), \sigma_\gamma^2(w) \big),
\end{equation}
where \(\mu_\gamma(\cdot)\) and \(\sigma_\gamma^2(\cdot)\) are flexible functions jointly parameterized by a shared heteroscedastic neural network \(\psi_\gamma(w)\). See, for example, \cite{stirn2023faithful}, where a neural heteroscedastic regression model is used to characterize the relationship between $Y$ and $X$ without the linear projection $B^\top X$.

Given an i.i.d. sample \(\{(X_i, Y_i)\}_{i=1}^n\), the log-likelihood under model (\ref{eq:ng model}) is
\[
\ell_{n}^{\rm G}(B,\gamma)
=
\sum_{i=1}^n
\left\{
-\frac{1}{2}\log\big(2\pi\sigma_\gamma^2(B^\top X_i)\big)
-
\frac{\big(Y_i - \mu_\gamma(B^\top X_i)\big)^2}
{2\sigma_\gamma^2(B^\top X_i)}
\right\}.
\]

The empirical profile log-likelihood becomes
\[
M_n^{\rm G}(B)
=
\sup_{\gamma\in\Gamma}
\frac{1}{n}
\ell_n^{\rm G}(B,\gamma),
\]
where $\Gamma$ denotes the parameter space of neural network \(\psi_\gamma(w)\).
The neural Gaussian SDR estimator is obtained by solving
\[
\hat{B}^{\rm G}
=
\arg\max_{ B\in \mathbb{R}^{p \times d}}\
M_n^{\rm G}(B) \quad \text{subject to } B^\top B = I_d.
\]

As in FlowSDR, the orthogonality constraint restricts \(B\) to the Stiefel manifold, and the estimated central subspace is given by \(\operatorname{span}(\hat{B}^{\rm G})\).
This formulation provides a useful conceptual bridge between FlowSDR and conditional normalizing flows. In FlowSDR, we learn a transformation that maps the conditional distribution \(Y | B^\top X\) to a simple base distribution, typically standard normal. When the true conditional distribution is already Gaussian, this 
transformation reduces to a location-scale standardization,
$$Z = \frac{Y-\mu_\gamma(B^\top X)}{\sigma_\gamma(B^\top X)}.$$ In this sense, the neural Gaussian SDR model can be viewed as a  special case of the conditional flow framework, where the expressiveness is concentrated entirely in modeling the conditional mean and variance of the Gaussian  distribution.

From an information-theoretic perspective, the objective remains aligned with the population target
\[
\mathcal{J}(B)={\rm E}\bigl\{\log f(Y | B^\top X)\bigr\}.
\]
The key distinction is that, instead of optimizing over a rich nonparametric class induced by invertible transformations, we restrict the conditional density 
to the Gaussian family with flexible mean and variance functions. Consequently, the neural Gaussian SDR estimator can be interpreted as a constrained version of FlowSDR, where the conditional likelihood is approximated within the heteroscedastic normal class.

\section{Numerical experiments}

\subsection{Methods for empirical comparison}

In \cite{cook2026foundations}, it is noted that there have been 17 articles and four books providing substantial reviews of SDR since \cite{burges2010dimension}. As it is infeasible to compare with all existing SDR methods, we focus on a representative 
set of widely used approaches, including MAVE \citep{xia2002mave}, csMAVE \citep{wang2008sliced}, 
neural network SDR (NN-SDR)
\citep{kapla2022fusing}, and belted and ensembled neural network SDR (BENN-SDR) \citep{tang2026belted}. MAVE and csMAVE are among the most prominent forward-regression SDR methods, but they can be computationally intensive and less effective when the predictor 
dimension $p$ is large.  Unlike classical inverse-regression methods, such as
SIR
\citep{li1991sir}
and SAVE
\citep{cook1991save}, which rely on assumptions about the 
predictor distribution, NN-SDR and BENN-SDR impose fewer distributional restrictions and can flexibly model 
complex relationships, often leading to improved estimation accuracy in practice.
More detailed population-level comparisons between these benchmark methods and our proposed approaches are provided in  Appendix \ref{sec:population comparison}. We note that principal fitted components (PFC) \citep{cook2007dimension} and likelihood acquired directions (LAD) \citep{cook2009likelihood} are established likelihood-based SDR methods. These approaches are typically developed under structured parametric modeling frameworks formulated through inverse regression. 
In contrast, our method directly models the forward conditional distribution $Y|X$ using a flexible nonparametric family. Given this difference in modeling focus, we do not include them in our empirical comparisons.

\subsection{Simulation}

Let $\mathcal V_{p,d}
=
\{ B \in \mathbb{R}^{p\times d} : B^T B=I_d\}$ be the Stiefel manifold, and let
$B_0\in\mathcal V_{p,d}$ be an orthonormal basis of the true central subspace. For an estimator $\hat B\in\mathcal V_{p,d}$, define
$$\Delta_{\rm F}(\hat B)=\|\hat B \hat B^\top-B_0 B_0^\top\|_{\rm F} \mbox{ and }\Delta_{\rm op}(\hat B)=\|\hat B \hat B^\top-B_0 B_0^\top\|_{\rm op},$$
where $\|\,\cdot\,\|_{\rm F}$ and $\|\,\cdot\,\|_{\rm op}$ denote the Frobenius norm and the operator norm, respectively. 
Based on $100$ replications, we report the mean, median and  standard deviation of $\Delta_{\rm F}$ and $\Delta_{\rm op}$ for six different estimators:   NN-SDR, BENN-SDR,  MAVE, csMAVE, neural Gaussian SDR, and FlowSDR. The implementation details of all the estimators are provided in Appendix \ref{sec:details}. Throughout the simulations, we treat the structural dimension $d$ of the central subspace as given, and estimation of $d$ is discussed in Appendix \ref{sec:order}.

In Model I, we follow the linear SDR setting of    \cite{tang2026belted}. Let $X_1,\ldots, X_p \stackrel{\mathrm{i.i.d.}}{\sim} \mathcal{N}(2,1)$, $\epsilon\sim \mathcal{N}(0,1)$ is independent of $\{X_j\}_{j=1}^p$, and 
\[
  Y = 2\sin\{(X_1+X_2)\pi/20\}
      +3\sin^2\{(X_3+3X_4)\pi/30\}\epsilon.
  \qquad .
\]
The central subspace is spanned by $\beta_1=(1,1,0,\ldots,0)^\top$ and $\beta_2=(0,0,1,3,0,\ldots,0)^\top$.

\begin{table}[t]
  \centering
  \scriptsize
  \setlength{\tabcolsep}{3.5pt}
  \caption{Results for Model I with $n=2000$ and varying ambient dimension $p$. The mean, median, and standard deviation of $\Delta_{\rm op}$ and $\Delta_{\rm F}$ over $100$ replications are reported.}
  \label{tab:benn-fv-comparison}
  \begin{tabular}{clcccc}
    \toprule
    $p$ & Method & $\Delta_{\rm op}$ mean $\pm$ SD & $\Delta_{\rm op}$ median & $\Delta_{\rm F}$ mean $\pm$ SD & $\Delta_{\rm F}$ median \\
    \midrule
    5 & NN-SDR & $0.5172 \pm 0.0631$ & $0.5157$ & $0.7800 \pm 0.1015$ & $0.7804$ \\
     & BENN-SDR & $0.3859 \pm 0.0974$ & $0.3735$ & $0.5657 \pm 0.1365$ & $0.5558$ \\
     & MAVE & $0.8136 \pm 0.1875$ & $0.8519$ & $1.1838 \pm 0.2643$ & $1.2534$ \\
     & csMAVE & $0.1420 \pm 0.0509$ & $0.1375$ & $0.2124 \pm 0.0711$ & $0.2050$ \\
     & neural Gaussian SDR & $0.0481 \pm 0.0501$ & $\mathbf{0.0358}$ & $0.0729 \pm 0.0710$ & $\mathbf{0.0570}$ \\
     & \textbf{FlowSDR} & $\mathbf{0.0393 \pm 0.0144}$ & $0.0392$ & $\mathbf{0.0601 \pm 0.0211}$ & $0.0585$ \\
       \hline
    \addlinespace[2pt]
    10 & NN-SDR & $0.9698 \pm 0.0133$ & $0.9714$ & $1.4216 \pm 0.0273$ & $1.4190$ \\
     & BENN-SDR & $0.9827 \pm 0.0196$ & $0.9897$ & $1.4250 \pm 0.0406$ & $1.4272$ \\
     & MAVE & $0.9052 \pm 0.1107$ & $0.9484$ & $1.3752 \pm 0.1570$ & $1.4203$ \\
     & csMAVE & $0.2500 \pm 0.0666$ & $0.2446$ & $0.3782 \pm 0.0908$ & $0.3636$ \\
     & neural Gaussian SDR & $0.0853 \pm 0.0599$ & $0.0672$ & $0.1347 \pm 0.0851$ & $0.1124$ \\
     & \textbf{FlowSDR} & $\mathbf{0.0638 \pm 0.0139}$ & $\mathbf{0.0621}$ & $\mathbf{0.1055 \pm 0.0222}$ & $\mathbf{0.1035}$ \\
     \hline
    \addlinespace[2pt]
    20 & NN-SDR & $0.9598 \pm 0.0106$ & $0.9600$ & $1.4844 \pm 0.0390$ & $1.4812$ \\
     & BENN-SDR & $0.8149 \pm 0.1284$ & $0.8317$ & $1.2671 \pm 0.1745$ & $1.2874$ \\
     & MAVE & $0.9200 \pm 0.0854$ & $0.9492$ & $1.5159 \pm 0.1338$ & $1.5395$ \\
     & csMAVE & $0.4565 \pm 0.1382$ & $0.4217$ & $0.6856 \pm 0.1906$ & $0.6350$ \\
     & neural Gaussian SDR & $0.1504 \pm 0.0861$ & $0.1174$ & $0.2436 \pm 0.1184$ & $0.1976$ \\
     & \textbf{FlowSDR} & $\mathbf{0.1148 \pm 0.0256}$ & $\mathbf{0.1093}$ & $\mathbf{0.1969 \pm 0.0374}$ & $\mathbf{0.1926}$ \\
    \bottomrule
  \end{tabular}
\end{table}

From Table~\ref{tab:benn-fv-comparison},  FlowSDR achieves the best overall performance across all settings, with neural Gaussian SDR consistently performing as a close second. MAVE performs poorly because it fails to recover $\beta_2$, which appears only through  the conditional variance. By design, csMAVE addresses this limitation and substantially improves over the original MAVE. NN-SDR and BENN-SDR are both outperformed by csMAVE as well as our new proposals. 

In Model II, we keep all other settings the same as in Model I, except that the normal error is replaced by $\epsilon\sim t_3/\sqrt{3}$.

\begin{table}[t]
  \centering
  \scriptsize
  \setlength{\tabcolsep}{3.5pt}
  \caption{Results for Model II with $n=2000$ and varying ambient dimension $p$. The mean, median, and standard deviation of $\Delta_{\rm op}$ and $\Delta_{\rm F}$ over $100$ replications are reported.}
  \label{tab:fv-t3-comparison}
  \begin{tabular}{clcccc}
    \toprule
    $p$ & Method & $\Delta_{\rm op}$ mean $\pm$ SD & $\Delta_{\rm op}$ median & $\Delta_{\rm F}$  mean $\pm$ SD & $\Delta_{\rm F}$  median \\
    \midrule
    5 & NN-SDR & $0.5281 \pm 0.0647$ & $0.5296$ & $0.7913 \pm 0.1038$ & $0.7806$ \\
     & BENN-SDR & $0.4911 \pm 0.1262$ & $0.4929$ & $0.7422 \pm 0.1934$ & $0.7343$ \\
     & MAVE & $0.8122 \pm 0.1824$ & $0.8728$ & $1.1821 \pm 0.2649$ & $1.2552$ \\
     & csMAVE & $0.1165 \pm 0.0436$ & $0.1160$ & $0.1778 \pm 0.0628$ & $0.1782$ \\
     & neural Gaussian SDR & $0.1175 \pm 0.1255$ & $0.0782$ & $0.1752 \pm 0.1835$ & $0.1155$ \\
     & \textbf{FlowSDR} & $\mathbf{0.0469 \pm 0.0176}$ & $\mathbf{0.0474}$ & $\mathbf{0.0704 \pm 0.0252}$ & $\mathbf{0.0713}$ \\
     \hline
    \addlinespace[2pt]
    10 & NN-SDR & $0.9713 \pm 0.0120$ & $0.9732$ & $1.4281 \pm 0.0304$ & $1.4222$ \\
     & BENN-SDR & $0.9524 \pm 0.0569$ & $0.9697$ & $1.4460 \pm 0.1174$ & $1.4403$ \\
     & MAVE & $0.8964 \pm 0.1221$ & $0.9529$ & $1.3567 \pm 0.1715$ & $1.4182$ \\
     & csMAVE & $0.1819 \pm 0.0389$ & $0.1792$ & $0.2964 \pm 0.0534$ & $0.2943$ \\
     & neural Gaussian SDR & $0.2079 \pm 0.2070$ & $0.1373$ & $0.3213 \pm 0.3068$ & $0.2241$ \\
     & \textbf{FlowSDR} & $\mathbf{0.0766 \pm 0.0165}$ & $\mathbf{0.0760}$ & $\mathbf{0.1201 \pm 0.0252}$ & $\mathbf{0.1163}$ \\
     \hline
    \addlinespace[2pt]
    20 & NN-SDR & $0.9591 \pm 0.0156$ & $0.9592$ & $1.4943 \pm 0.0707$ & $1.4826$ \\
     & BENN-SDR & $0.9162 \pm 0.0825$ & $0.9373$ & $1.5336 \pm 0.1663$ & $1.5274$ \\
     & MAVE & $0.9319 \pm 0.0714$ & $0.9608$ & $1.4927 \pm 0.1187$ & $1.5104$ \\
     & csMAVE & $0.2626 \pm 0.0352$ & $0.2660$ & $0.4445 \pm 0.0479$ & $0.4394$ \\
     & neural Gaussian SDR & $0.3143 \pm 0.2382$ & $0.2178$ & $0.5128 \pm 0.3870$ & $0.3757$ \\
     & \textbf{FlowSDR} & $\mathbf{0.1190 \pm 0.0246}$ & $\mathbf{0.1161}$ & $\mathbf{0.1971 \pm 0.0406}$ & $\mathbf{0.1918}$ \\
    \bottomrule
  \end{tabular}
\end{table}

From Table~\ref{tab:fv-t3-comparison}, FlowSDR again achieves the best overall performance across all settings. 
Neural Gaussian SDR performs similarly to csMAVE when $p=5$, but is outperformed by csMAVE when $p=10$ and $20$. Compared with Model I, the advantage of FlowSDR over neural Gaussian SDR becomes more pronounced in Model II, reflecting the robustness of FlowSDR to violations of the Gaussian distribution assumption.

In Model III, we consider a two-component mixture response model. For $X=(X_1,\ldots,X_{20})$, let $X\sim \mathcal{N}(0,I_{20})$. Let
\[
  Y \mid X \sim 0.3\,\mathcal{N}(\beta_3^\top X,1)+0.7\,\mathcal{N}(\beta_4^\top X,1),
\]
where $\beta_3=(1,0.5,0,\ldots,0)^\top$ and $\beta_4=(-1,0,0.5,0,\ldots,0)^\top$. The central subspace is $\operatorname{span}(\beta_3,\beta_4)$.

\begin{table}[t]
  \centering
  \scriptsize
  \setlength{\tabcolsep}{3.5pt}
  \caption{Results for Model III with $p=20$ and varying sample size $n$. The mean, median, and standard deviation of $\Delta_{\rm op}$ and $\Delta_{\rm F}$ over $100$ replications are reported.}
  \label{tab:mixture-comparison}
  \begin{tabular}{clcccc}
    \toprule
    $n$ & Method & $\Delta_{\rm op}$ mean $\pm$ SD & $\Delta_{\rm op}$ median & $\Delta_{\rm F}$ mean $\pm$ SD & $\Delta_{\rm F}$  median \\
    \midrule
    2000 & NN-SDR & $0.9791 \pm 0.0401$ & $0.9969$ & $1.4412 \pm 0.0576$ & $1.4583$ \\
     & BENN-SDR & $0.9955 \pm 0.0109$ & $0.9983$ & $1.4305 \pm 0.0173$ & $1.4321$ \\
     & MAVE & $0.9391 \pm 0.0759$ & $0.9724$ & $1.3696 \pm 0.1039$ & $1.4131$ \\
     & csMAVE & $0.8161 \pm 0.1700$ & $0.8774$ & $1.1821 \pm 0.2373$ & $1.2829$ \\
     & \textbf{neural Gaussian SDR} & $\mathbf{0.4804 \pm 0.1284}$ & $\mathbf{0.4548}$ & $\mathbf{0.7063 \pm 0.1772}$ & $\mathbf{0.6674}$ \\
     & FlowSDR & $0.6334 \pm 0.1784$ & $0.6155$ & $0.9207 \pm 0.2468$ & $0.8869$ \\
     \hline
    \addlinespace[2pt]
    5000 & NN-SDR & $0.9878 \pm 0.0295$ & $0.9959$ & $1.4173 \pm 0.0408$ & $1.4262$ \\
     & BENN-SDR & $0.9982 \pm 0.0019$ & $0.9988$ & $1.4193 \pm 0.0040$ & $1.4196$ \\
     & MAVE & $0.9542 \pm 0.0551$ & $0.9745$ & $1.3654 \pm 0.0775$ & $1.3955$ \\
     & csMAVE & $0.5549 \pm 0.2326$ & $0.4819$ & $0.7987 \pm 0.3258$ & $0.7005$ \\
     & neural Gaussian SDR & $0.2963 \pm 0.0694$ & $0.2851$ & $0.4340 \pm 0.0954$ & $0.4220$ \\
     & \textbf{FlowSDR} & $\mathbf{0.2875 \pm 0.0618}$ & $\mathbf{0.2824}$ & $\mathbf{0.4230 \pm 0.0842}$ & $\mathbf{0.4154}$ \\
        \hline
    \addlinespace[2pt]
    10000 & NN-SDR & $0.9921 \pm 0.0150$ & $0.9945$ & $1.4122 \pm 0.0206$ & $1.4144$ \\
     & BENN-SDR & $0.9983 \pm 0.0012$ & $0.9985$ & $1.4154 \pm 0.0021$ & $1.4155$ \\
     & MAVE & $0.9347 \pm 0.0829$ & $0.9665$ & $1.3302 \pm 0.1169$ & $1.3751$ \\
     & csMAVE & $0.3411 \pm 0.1851$ & $0.2860$ & $0.4937 \pm 0.2593$ & $0.4189$ \\
     & neural Gaussian SDR & $0.2109 \pm 0.0504$ & $0.2021$ & $0.3092 \pm 0.0691$ & $0.2979$ \\
     & \textbf{FlowSDR} & $\mathbf{0.1905 \pm 0.0344}$ & $\mathbf{0.1902}$ & $\mathbf{0.2809 \pm 0.0468}$ & $\mathbf{0.2778}$ \\
    \bottomrule
  \end{tabular}
\end{table}

From Table ~\ref{tab:mixture-comparison}, FlowSDR and neural Gaussian SDR substantially outperform the competing SDR methods across all sample sizes. When $n=2000$, neural Gaussian SDR achieves the best performance, while FlowSDR becomes the top-performing method with $n=5000$ and $10000$. While csMAVE, FlowSDR, and neural Gaussian SDR improve as $n$ increases, MAVE, NN-SDR, and BENN-SDR remain close to random guessing in this multimodal setting.

In Model IV, we consider a conditional distribution where the tail behavior is not captured by the mean–variance structure. For $X=(X_1,\ldots,X_{20})$, let $X\sim \mathcal{N}(0,I_{20})$. 
Let $W=\beta_5^\top X$, where
$\beta_5=(1,1,0,\ldots,0)^\top/\sqrt{2}$. We generate the response as
\[
  Y\mid X \sim
  \begin{cases}
    \mathcal{N}(0,1), & \mbox{ if }W>0,\\
    t_3/\sqrt{3}, & \mbox{ if }W\leq 0.
  \end{cases}
\]
Both regimes have mean zero and variance one, so the sufficient direction enters only through the tail shape of the conditional distribution. The central subspace is $\operatorname{span}(\beta_5)$.

\begin{table}[t]
  \centering
  \scriptsize
  \setlength{\tabcolsep}{3.5pt}
  \caption{Results for Model IV with $p=20$ and varying sample size $n$. The mean, median, and standard deviation of $\Delta_{\rm op}$ and $\Delta_{\rm F}$ over $100$ replications are reported.}
  \label{tab:sign-tail-comparison}
  \begin{tabular}{clcccc}
    \toprule
    $n$ & Method & $\Delta_{\rm op}$ mean $\pm$ SD & $\Delta_{\rm op}$ median & $\Delta_{\rm F}$ mean $\pm$ SD & $\Delta_{\rm F}$ median \\
    \midrule
    2000 & NN-SDR & $0.9315 \pm 0.0375$ & $0.9315$ & $1.3173 \pm 0.0531$ & $1.3173$ \\
     & BENN-SDR & $0.9224 \pm 0.0478$ & $0.9257$ & $1.3044 \pm 0.0676$ & $1.3092$ \\
     & MAVE & $0.9749 \pm 0.0382$ & $0.9912$ & $1.3788 \pm 0.0541$ & $1.4018$ \\
     & csMAVE & $0.8939 \pm 0.1268$ & $0.9446$ & $1.2642 \pm 0.1793$ & $1.3359$ \\
     & neural Gaussian SDR & $0.9678 \pm 0.0522$ & $0.9858$ & $1.3687 \pm 0.0738$ & $1.3941$ \\
     & \textbf{FlowSDR} & $\mathbf{0.5733 \pm 0.1692}$ & $\mathbf{0.5635}$ & $\mathbf{0.8108 \pm 0.2393}$ & $\mathbf{0.7969}$ \\
        \hline
    \addlinespace[2pt]
    5000 & NN-SDR & $0.9138 \pm 0.0496$ & $0.9185$ & $1.2923 \pm 0.0702$ & $1.2990$ \\
     & BENN-SDR & $0.9124 \pm 0.0509$ & $0.9190$ & $1.2904 \pm 0.0720$ & $1.2997$ \\
     & MAVE & $0.9697 \pm 0.0466$ & $0.9843$ & $1.3714 \pm 0.0659$ & $1.3921$ \\
     & csMAVE & $0.7584 \pm 0.2246$ & $0.8367$ & $1.0726 \pm 0.3177$ & $1.1833$ \\
     & neural Gaussian SDR & $0.9513 \pm 0.0572$ & $0.9715$ & $1.3454 \pm 0.0810$ & $1.3739$ \\
     & \textbf{FlowSDR} & $\mathbf{0.2512 \pm 0.0789}$ & $\mathbf{0.2371}$ & $\mathbf{0.3553 \pm 0.1116}$ & $\mathbf{0.3353}$ \\
        \hline
    \addlinespace[2pt]
    10000 & NN-SDR & $0.9180 \pm 0.0414$ & $0.9235$ & $1.2982 \pm 0.0586$ & $1.3060$ \\
     & BENN-SDR & $0.9233 \pm 0.0413$ & $0.9225$ & $1.3057 \pm 0.0584$ & $1.3046$ \\
     & MAVE & $0.9731 \pm 0.0396$ & $0.9905$ & $1.3762 \pm 0.0560$ & $1.4008$ \\
     & csMAVE & $0.4035 \pm 0.2684$ & $0.2853$ & $0.5706 \pm 0.3796$ & $0.4034$ \\
     & neural Gaussian SDR & $0.9587 \pm 0.0574$ & $0.9841$ & $1.3558 \pm 0.0812$ & $1.3917$ \\
     & \textbf{FlowSDR} & $\mathbf{0.1449 \pm 0.1080}$ & $\mathbf{0.1245}$ & $\mathbf{0.2049 \pm 0.1528}$ & $\mathbf{0.1761}$ \\
    \bottomrule
  \end{tabular}
\end{table}

From Table ~\ref{tab:sign-tail-comparison}, we clearly observe the distinction between FlowSDR and neural Gaussian SDR.    Neural Gaussian SDR remains close to random guessing 
as  $n$ increases. This is expected because neither the conditional mean nor the conditional variance of $Y\mid X$ captures the SDR direction. Although csMAVE improves substantially when $n=10000$,  FlowSDR is the only method that reliably recovers the tail direction, reducing the mean operator error from $0.5733$ at $n=2000$ to $0.1449$ at $n=10000$.

\subsection{UTKFace age analysis}
We evaluate FlowSDR on the UTKFace face-age dataset, where each image filename encodes age, gender, ethnicity, and timestamp. The response variable is the continuous age label. Our analysis is feature-based rather than end-to-end image training. From the aligned and cropped images, we extract HOG descriptors  \citep{dalal2005hog}  and RGB color histograms, yielding $1812$ features for each of $23705$ usable images. We use a single stratified split consisting of 70\% training, 15\% validation, and 15\% test samples, where the stratification is based on age bins.
The validation set is used for early stopping and model selection during neural network training, while the test set is reserved exclusively for 
final evaluation. We provide the implementation details of this analysis in Appendix D.

Across different methods, we report test data MAE, RMSE and NLL in Table~\ref{tab:utkface-hog-color-results}. In addition to the methods compared in the simulation studies, we also include PLS \citep{wold2001pls} and the full MLP Gaussian method, which corresponds to neural Gaussian SDR without dimension reduction. 
MAVE and csMAVE are not designed to handle large predictor dimensions directly. We therefore apply marginal correlation screening to select the top $100$ features from $X \in \mathbb{R}^{1812}$, and then fit MAVE and csMAVE using the screened features. 

Since the true structural dimension $d$ of the central subspace is unknown, Table~\ref{tab:utkface-hog-color-results} include $d \in \{2,4,6,8\}$ as the working structural dimension. 
FlowSDR achieves the strongest overall prediction performance. Specifically, the best MAE  $7.5672$ is achieved by FlowSDR with $d=8$, and the best NLL  $0.4547$ is achieved by FlowSDR with $d=4$. While the best RMSE $10.8607$ is achieved by BENN-SDR with $d=8$, the RMSE values based on FlowSDR are not much larger. 

One advantage of the proposed likelihood-based framework is that it directly estimates the conditional density $f(Y | B^\top X)$. This naturally facilitates the construction of conformal prediction intervals \citep{vovk2005algorithmic,lei2018distribution}. A key component of conformal prediction is the use of a calibration set to estimate the conformal quantile required for finite-sample distribution-free coverage on future test observations. The detailed development of conformal prediction for FlowSDR and neural Gaussian SDR is provided in Appendix~\ref{sec:ci}.

\begin{table}[t]
  \centering
  \scriptsize
  \setlength{\tabcolsep}{3.5pt}
    \caption{UTKFace age prediction results.}
  \label{tab:utkface-hog-color-results}
  \begin{tabular}{llccc}
    \toprule
    Method & $d$ & MAE  & RMSE  & NLL \\
    \midrule
    Mean age & -- & 15.3632 & 19.9325 & 1.4220 \\
    \addlinespace[2pt]
    Full MLP Gaussian & 1812 & 9.1305 & 12.6048 & 0.8202 \\
    \addlinespace[2pt]
    PLS & 2 & 11.0413 & 14.2752 & 1.0887 \\
    PLS & 4 & 10.1378 & 13.2021 & 1.0115 \\
    PLS & 6 & 9.8602 & 12.8915 & 0.9885 \\
    PLS & 8 & 9.7677 & 12.7086 & 0.9744 \\
    \addlinespace[2pt]
    MAVE (screened) & 2 & 10.2714 & 14.0034 & 0.9599 \\
    MAVE (screened) & 4 & 10.3170 & 14.1961 & 0.9651 \\
    MAVE (screened) & 6 & 10.1772 & 14.1088 & 0.9669 \\
    MAVE (screened) & 8 & 10.0990 & 13.9373 & 0.9467 \\
    \addlinespace[2pt]
    csMAVE (screened) & 2 & 9.9563 & 13.9294 & 0.9221 \\
    csMAVE (screened) & 4 & 10.1548 & 13.9639 & 0.9228 \\
    csMAVE (screened) & 6 & 9.9163 & 13.7077 & 0.9549 \\
    csMAVE (screened) & 8 & 9.8385 & 13.7456 & 0.9415 \\
    \addlinespace[2pt]
    NN-SDR & 2 & 8.1268 & 11.4577 & 0.7246 \\
    NN-SDR & 4 & 7.8089 & 11.0967 & 0.6673 \\
    NN-SDR & 6 & 7.6483 & 10.8639 & 0.6636 \\
    NN-SDR & 8 & 7.7730 & 11.1074 & 0.7098 \\
    \addlinespace[2pt]
    BENN-SDR & 2 & 8.1844 & 11.3808 & 0.7068 \\
    BENN-SDR & 4 & 7.8504 & 10.9783 & 0.6397 \\
    BENN-SDR & 6 & 7.7110 & 10.9905 & 0.7004 \\
    BENN-SDR & 8 & 7.6103 & \textbf{10.8607} & 0.7362 \\
    \addlinespace[2pt]
    neural Gaussian SDR & 2 & 8.2245 & 11.6220 & 0.7047 \\
    neural Gaussian SDR & 4 & 7.9063 & 11.3618 & 0.7357 \\
    neural Gaussian SDR & 6 & 8.1976 & 11.6335 & 0.7812 \\
    neural Gaussian SDR & 8 & 8.0622 & 11.5171 & 0.8557 \\
    \addlinespace[2pt]
    FlowSDR & 2 & 7.7877 & 11.4942 & 0.4742 \\
    FlowSDR & 4 & 7.5981 & 11.2485 & \textbf{0.4547} \\
    FlowSDR & 6 & 7.7559 & 11.4290 & 0.5287 \\
    FlowSDR & 8 & \textbf{7.5672} & 11.2708 & 0.5224 \\
    \bottomrule
  \end{tabular}
\end{table}

We use a single stratified split consisting of 70\% training, 10\% validation, 10\% calibration, and 10\% test samples, and compare prediction interval performance on the test set at the 95\% nominal level with working structural dimension $d=4$. Prediction intervals based on neural Gaussian SDR achieve an empirical coverage of 93.8\% with average interval length $42.53$. FlowSDR achieves improved empirical coverage of 94.5\% together with a shorter average interval length of $37.19$.

\section{Concluding remarks}

We develop a novel likelihood-based framework for joint representation learning and conditional density estimation. Superior SDR performance is demonstrated across a range of synthetic examples, while the estimated conditional density naturally enables the construction of conformal prediction intervals.

Conditional normalizing flows play a central role in this framework by transporting the unknown conditional distribution to a fixed base distribution. Although we focus on linear SDR with a univariate response in this paper, the proposed framework can be naturally extended to nonlinear SDR with multivariate responses. Specifically, for predictor \(X \in \mathbb{R}^p\) and response \(Y \in \mathbb{R}^q\), the goal is to learn a low-dimensional nonlinear representation \(\tau(X)\) satisfying
\[
Y \indep X \mid \tau(X).
\]
The corresponding nonlinear FlowSDR objective becomes
\[
\max_{\tau,\theta}
{\rm E}\left\{
\log f_\theta(Y \mid \tau(X))
\right\},
\]
where \(f_\theta(Y \mid \tau(X))\) is modeled using a multivariate conditional normalizing flow.

Since normalizing flows are closely connected to stochastic interpolation and transport-based generative modeling \citep{albergo2023building,albergo2025stochastic,xu2025conditional}, sufficient dimension reduction along this direction warrants further investigation.

\bibliographystyle{plainnat}
\bibliography{references}

@article{wold2001pls,
  title={{PLS}-regression: a basic tool of chemometrics},
  author={Wold, Svante and Sj{\"o}str{\"o}m, Michael and Eriksson, Lennart},
  journal={Chemometrics and Intelligent Laboratory Systems},
  volume={58},
  number={2},
  pages={109--130},
  year={2001}
}

@book{vovk2005algorithmic,
  title={Algorithmic Learning in a Random World},
  author={Vovk, Vladimir and Gammerman, Alexander and Shafer, Glenn},
  year={2005},
  publisher={Springer}
}

@article{lei2018distribution,
  title={Distribution-Free Predictive Inference for Regression},
  author={Lei, Jing and G'Sell, Max and Rinaldo, Alessandro and Tibshirani, Ryan and Wasserman, Larry},
  journal={Journal of the American Statistical Association},
  volume={113},
  number={523},
  pages={1094--1111},
  year={2018}
}

@inproceedings{dalal2005hog,
  title={Histograms of oriented gradients for human detection},
  author={Dalal, Navneet and Triggs, Bill},
  booktitle={IEEE Computer Society Conference on Computer Vision and Pattern Recognition},
  year={2005}
}

@inproceedings{irons2022triangular,
  title     = {Triangular Flows for Generative Modeling: Statistical Consistency, Smoothness Classes, and Fast Rates},
  author    = {Irons, Nicholas J. and Scetbon, Meyer and Pal, Soumik and Harchaoui, Zaid},
  booktitle = {International Conference on Artificial Intelligence and Statistics (AISTATS)},
  year      = {2022}
}

@article{cook2009likelihood,
  author  = {Cook, R. Dennis and Forzani, Liliana},
  title   = {Likelihood-based sufficient dimension reduction},
  journal = {Journal of the American Statistical Association},
  volume  = {104},
  number  = {485},
  pages   = {197--208},
  year    = {2009}
}

@article{edelman1998geometry,
  author  = {Edelman, Alan and Arias, Tom{\'a}s A. and Smith, Steven T.},
  title   = {The Geometry of Algorithms with Orthogonality Constraints},
  journal = {SIAM Journal on Matrix Analysis and Applications},
  year    = {1998},
  volume  = {20},
  number  = {2},
  pages   = {303--353}
}

@article{burges2010dimension,
  title={Dimension reduction: A guided tour},
  author={Burges, Christopher J. C.},
  journal={Foundations and Trends{\textregistered} in Finance},
  volume={2},
  number={4},
  pages={275--365},
  year={2010}
}

@article{dinh2014nice,
  title={NICE: Non-linear Independent Components Estimation},
  author={Dinh, Laurent and Krueger, David and Bengio, Yoshua},
  journal={arXiv preprint arXiv:1410.8516},
  year={2014}
}

@inproceedings{stirn2023faithful,
  title={Faithful Heteroscedastic Regression with Neural Networks},
  author={Stirn, Andrew and Wessels, Harm and Schertzer, Megan and Pereira, Laura and Sanjana, Neville and Knowles, David},
  booktitle={International Conference on Artificial Intelligence and Statistics},
  year={2023}
}

@inproceedings{belghazi2018mine,
  title     = {Mutual Information Neural Estimation},
  author    = {Belghazi, Mohamed Ishmael and Baratin, Aristide and Rajeshwar, Sai and Ozair, Sherjil and Bengio, Yoshua and Courville, Aaron and Hjelm, Devon},
  booktitle = {International Conference on Machine Learning},
  year      = {2018}
}

@article{nguyen2010estimating,
  title   = {Estimating divergence functionals and the likelihood ratio by convex risk minimization},
  author  = {Nguyen, XuanLong and Wainwright, Martin J. and Jordan, Michael I.},
  journal = {IEEE Transactions on Information Theory},
  volume  = {56},
  number  = {11},
  pages   = {5847--5861},
  year    = {2010},
  publisher = {IEEE}
}

@article{oord2018cpc,
  title   = {Representation Learning with Contrastive Predictive Coding},
  author  = {van den Oord, Aaron and Li, Yazhe and Vinyals, Oriol},
  journal = {arXiv preprint arXiv:1807.03748},
  year    = {2018}
}

@article{li1991sir,
  author = {Li, Ker-Chau},
  title = {Sliced Inverse Regression for Dimension Reduction},
  journal = {Journal of the American Statistical Association},
  year = {1991},
  volume = {86},
  number = {414},
  pages = {316--327}
}

@article{cook1991save,
  author = {Cook, R. Dennis and Weisberg, Sanford},
  title = {Discussion of {S}liced {I}nverse {R}egression for {D}imension {R}eduction},
  journal = {Journal of the American Statistical Association},
  year = {1991},
  volume = {86},
  number = {414},
  pages = {328--332}
}

@article{xia2002mave,
  author = {Xia, Yingcun and Tong, Howell and Li, W. K. and Zhu, Li-Xing},
  title = {An Adaptive Estimation of Dimension Reduction Space},
  journal = {Journal of the Royal Statistical Society Series B: Statistical Methodology},
  year = {2002},
  volume = {64},
  number = {3},
  pages = {363--410}
}

@inproceedings{dinh2017realnvp,
  author = {Dinh, Laurent and Sohl-Dickstein, Jascha and Bengio, Samy},
  title = {Density Estimation Using Real {NVP}},
  booktitle = {International Conference on Learning Representations},
  year = {2017}
}

@inproceedings{durkan2019neural,
  author = {Durkan, Conor and Bekasov, Artur and Murray, Iain and Papamakarios, George},
  title = {Neural Spline Flows},
  booktitle = {Advances in Neural Information Processing Systems},
  year = {2019}
}

@inproceedings{papamakarios2017maf,
  title     = {Masked Autoregressive Flow for Density Estimation},
  author    = {Papamakarios, George and Pavlakou, Theo and Murray, Iain},
  booktitle = {Advances in Neural Information Processing Systems},
  volume    = {30},
  year      = {2017}
}

@article{papamakarios2021normalizing,
  author = {Papamakarios, George and Nalisnick, Eric and Rezende, Danilo Jimenez and Mohamed, Shakir and Lakshminarayanan, Balaji},
  title = {Normalizing Flows for Probabilistic Modeling and Inference},
  journal = {Journal of Machine Learning Research},
  year = {2021},
  volume = {22},
  number = {57},
  pages = {1--64}
}

@article{zhang2020maximum,
  title   = {The maximum separation subspace in sufficient dimension reduction with categorical response},
  author  = {Zhang, Xin and Mai, Qing and Zou, Hui},
  journal = {Journal of Machine Learning Research},
  volume  = {21},
  number  = {29},
  pages   = {1--36},
  year    = {2020}
}

@article{chen2024deep,
  title   = {Deep nonlinear sufficient dimension reduction},
  author  = {Chen, Yinfeng and Jiao, Yuling and Qiu, Rui and Yu, Zhou},
  journal = {The Annals of Statistics},
  volume  = {52},
  number  = {3},
  pages   = {1201--1226},
  year    = {2024}
}

@article{yin2008successive,
  title   = {Successive direction extraction for estimating the central subspace in a multiple-index regression},
  author  = {Yin, Xiangrong and Li, Bing and Cook, R. Dennis},
  journal = {Journal of Multivariate Analysis},
  volume  = {99},
  number  = {8},
  pages   = {1733--1757},
  year    = {2008}
}

@inproceedings{liang2022nonlinear,
  author = {Liang, Siqi and Sun, Yan and Liang, Faming},
  title = {Nonlinear sufficient dimension reduction with a stochastic neural network},
  booktitle = {Advances in Neural Information Processing Systems},
  year = {2022}
}

@inproceedings{zhang2019learning,
  author = {Zhang, Guannan and Zhang, Jiaxin and Hinkle, Jacob},
  title = {Learning nonlinear level sets for dimensionality reduction in function approximation},
  booktitle = {Advances in Neural Information Processing Systems},
  year = {2019}
}

@inproceedings{meng2020sufficient,
  author = {Meng, Cheng and Yu, Jun and Zhang, Jingyi and Ma, Ping and Zhong, Wenxuan},
  title = {Sufficient dimension reduction for classification using principal optimal transport direction},
  booktitle = {Advances in Neural Information Processing Systems},
  year = {2020}
}

@article{yang2025golden,
  title   = {Golden Ratio-Based Sufficient Dimension Reduction},
  author  = {Yang, Wenjing and Yang, Yuhong},
  journal = {IEEE Transactions on Information Theory},
    volume  = {71},
  number  = {8},
  year    = {2025}
}

@article{yin2002dimension,
  title={Dimension reduction for the conditional k th moment in regression},
  author={Yin, Xiangrong and Cook, R. Dennis},
  journal={Journal of the Royal Statistical Society Series B: Statistical Methodology},
  volume={64},
  number={2},
  pages={159--175},
  year={2002}
}

@article{kapla2022fusing,
  title   = {Fusing sufficient dimension reduction with neural networks},
  author  = {Kapla, Daniel and Fertl, Lukas and Bura, Efstathia},
  journal = {Computational Statistics \& Data Analysis},
  volume  = {168},
  pages   = {107390},
  year    = {2022}
}

@article{tang2026belted,
  title   = {Belted and ensembled neural network for linear and nonlinear sufficient dimension reduction},
  author  = {Tang, Yin and Li, Bing},
  journal = {Journal of the American Statistical Association},
   note    = {Forthcoming},
  year    = {2026},
  doi     = {10.1080/01621459.2025.2590775}
}

@article{fukumizu2004dimensionality,
  title   = {Dimensionality reduction for supervised learning with Reproducing Kernel Hilbert Spaces},
  author  = {Fukumizu, Kenji and Bach, Francis R. and Jordan, Michael I.},
  journal = {Journal of Machine Learning Research},
  volume  = {5},
  pages={73--99},  
  year    = {2004}
}

@article{fukumizu2009kernel,
  title   = {Kernel dimension reduction in regression},
  author  = {Fukumizu, Kenji and Bach, Francis R. and Jordan, Michael I.},
  journal = {The Annals of Statistics},
  volume  = {37},
  number  = {4},
  pages   = {1871--1905},
  year    = {2009}
}

@inproceedings{wu2008localized,
  author = {Wu, Qiang and Mukherjee, Sayan and Liang, Feng},
  title = {Localized sliced inverse regression},
  booktitle = {Advances in Neural Information Processing Systems},
  year = {2008}
}

@article{yeh2008nonlinear,
  title   = {Nonlinear dimension reduction with kernel sliced inverse regression},
  author  = {Yeh, Yi-Ren and Huang, Su-Yun and Lee, Yuh-Jye},
  journal = {IEEE Transactions on Knowledge and Data Engineering},
  volume  = {21},
  number  = {11},
  pages   = {1590--1603},
  year    = {2008}
}

@article{cai2020online,
  title   = {Online sufficient dimension reduction through sliced inverse regression},
  author  = {Cai, Zhanrui and Li, Runze and Zhu, Liping},
  journal = {Journal of Machine Learning Research},
  volume  = {21},
  number  = {10},
  pages   = {1--25},
  year    = {2020}
}

@article{artemiou2021realtime,
  title   = {Real-time sufficient dimension reduction through principal least squares support vector machines},
  author  = {Artemiou, Andreas and Dong, Yuexiao and Shin, Seung Jun},
  journal = {Pattern Recognition},
  volume  = {112},
  pages   = {107768},
  year    = {2021}
}

@article{wang2008sliced,
  title   = {Sliced regression for dimension reduction},
  author  = {Wang, Hansheng and Xia, Yingcun},
  journal = {Journal of the American Statistical Association},
  volume  = {103},
  number  = {482},
  pages   = {811--821},
  year    = {2008}
}

@article{hsing2009rkhs,
  title   = {An {RKHS} formulation of the inverse regression dimension-reduction problem},
  author  = {Hsing, Tailen and Ren, Haobo},
  journal = {The Annals of Statistics},
    volume  = {37},
  number  = {2},
  pages   = {726--755},
  year    = {2009}
}

@article{li2017nonlinear,
  title   = {Nonlinear sufficient dimension reduction for functional data},
  author  = {Li, Bing and Song, Jun},
    journal = {The Annals of Statistics},
     volume  = {45},
  number  = {3},
  pages   = {1059--1095},
  year    = {2017}
}

@article{tang2020highdim,
  title   = {High-dimensional interactions detection with sparse principal hessian matrix},
  author  = {Tang, Cheng Yong and Fang, Ethan X. and Dong, Yuexiao},
  journal = {Journal of Machine Learning Research},
  volume  = {21},
  number  = {19},
  pages   = {1--25},
  year    = {2020}
}

@article{luo2022efficient,
  title   = {On efficient dimension reduction with respect to the interaction between two response variables},
  author  = {Luo, Wei},
  journal = {Journal of the Royal Statistical Society Series B: Statistical Methodology},
  volume  = {84},
  number  = {2},
  pages   = {269--294},
  year    = {2022}
}

@article{tangkaratt2014conditional,
  title   = {Conditional density estimation with dimensionality reduction via squared-loss conditional entropy minimization},
  author  = {Tangkaratt, Voot and Xie, Ning and Sugiyama, Masashi},
  journal = {Neural Computation},
  volume  = {27},
  number  = {1},
  pages   = {228--254},
  year    = {2014}
}

@article{kobyzev2020normalizing,
  title   = {Normalizing flows: An introduction and review of current methods},
  author  = {Kobyzev, Ivan and Prince, Simon J. D. and Brubaker, Marcus A.},
  journal = {IEEE Transactions on Pattern Analysis and Machine Intelligence},
  volume  = {43},
  number  = {11},
  pages   = {3964--3979},
  year    = {2020}
}

@article{albergo2025stochastic,
  title   = {Stochastic interpolants: A unifying framework for flows and diffusions},
  author  = {Albergo, Michael and Boffi, Nicholas M. and Vanden-Eijnden, Eric},
  journal = {Journal of Machine Learning Research},
  volume  = {26},
  number  = {209},
  pages   = {1--80},
  year    = {2025}
}

@inproceedings{albergo2023building,
  author    = {Albergo, Michael and Vanden-Eijnden, Eric},
  title     = {Building Normalizing Flows with Stochastic Interpolants},
  booktitle = {International Conference on Learning Representations},
  year      = {2023}
}

@article{xu2025conditional,
  title={On Conditional Stochastic Interpolation for Generative Nonlinear Sufficient Dimension Reduction},
  author={Xu, Shuntuo and Yu, Zhou and Huang, Jian},
  journal={arXiv preprint arXiv:2512.18971},
  year={2025}
}

@book{li2018sdr,
  title     = {Sufficient Dimension Reduction: Methods and Applications with R},
  author    = {Li, Bing},
  publisher = {Chapman and Hall/CRC},
  year      = {2018}
}

@article{cook2007dimension,
  title   = {Fisher Lecture: Dimension Reduction in Regression},
  author  = {Cook, R. Dennis},
  journal = {Statistical Science},
  volume  = {22},
  number  = {1},
  pages   = {1--26},
  year    = {2007}
}

@article{cook2026foundations,
  title   = {On the Foundational Arguments of Sufficient Dimension Reduction},
  author  = {Cook, R. Dennis},
  journal = {Wiley Interdisciplinary Reviews: Computational Statistics},
  year    = {2026},
    note    = {Forthcoming},
      doi     = {10.1002/wics.70064}
}

@article{luo2020matching,
  title   = {Matching Using Sufficient Dimension Reduction for Causal Inference},
  author  = {Luo, Wei and Zhu, Yeying},
  journal = {Journal of Business \& Economic Statistics},
  volume  = {38},
  number  = {4},
  pages   = {888--900},
  year    = {2020}
}

@article{li2011psvm,
  title   = {Principal Support Vector Machines for Linear and Nonlinear Sufficient Dimension Reduction},
  author  = {Li, Bing and Artemiou, Andreas and Li, Lexin},
  journal = {Annals of Statistics},
  volume  = {39},
  number  = {6},
  pages   = {3182--3210},
  year    = {2011}
}

@article{ding2020double,
  title   = {Double-slicing assisted sufficient dimension reduction for high-dimensional censored data},
  author  = {Ding, Shanshan and Qian, Wei and Wang, Lan},
  journal = {The Annals of Statistics},
  volume  = {48},
  number  = {4},
  pages   = {2132--2154},
  year    = {2020}
}

\newpage
\appendix
\section*{Appendix}

\section{Fisher consistency of the information index maximizer}
\label{sec:fisher info}

First we formally define the central subspace of the response $Y$ versus the predictor $X$. Recall from (\ref{eq:sdr}) that 
  $$Y \indep X \mid B^{\top}X.$$
  The conditional independence above holds when $B$ is replaced with any matrix whose columns form a basis of $\operatorname{span}(B)$. We refer to $\operatorname{span}(B)$ as the dimension reduction subspace (DRS) for the regression of $Y$ on $X$. Let $\mathcal{S}_{Y|X}$ denote the intersection of all DRS's. Under the conditions of Proposition 5 in  \cite{yin2008successive}, $\mathcal{S}_{Y|X}$ remains a DRS, which we refer to as the central subspace (CS).  We assume that the CS exists throughout this Appendix. The dimension $d$ of  $\mathcal{S}_{Y|X}$ is called the structural dimension of the regression. 

The following Lemma is parallel to Proposition 1 of \cite{yin2008successive}.

\begin{lemma}[Ordering and characterization properties of the information index]
\label{lem:order}
Let $A_j$ be a $p \times q_j$ matrix, $j = 1,2$, and let $A$ be a $p \times q$ matrix. For the  information index $\mathcal{J}(A)$ 
defined in (\ref{eq:mi}):
$$ \mathcal{J}(A)={\rm E}\left\{ \log{f(Y | A^\top X)} \right\},$$
 the following properties hold:

\begin{enumerate}
\item If $\operatorname{span}(A) \subseteq \operatorname{span}(A_1)$, then $\mathcal{J}(A) \le \mathcal{J}(A_1)$.  
If $\operatorname{span}(A) = \operatorname{span}(A_1)$, then $\mathcal{J}(A) = \mathcal{J}(A_1)$.

\item $\mathcal{J}(A) = \mathcal{J}(I_p)$ if and only if $Y \indep X \mid A^\top X$.

\item $\mathcal{J}(A) \ge 0$ for all $A$, and $\mathcal{J}(A) = 0$ if and only if $Y \indep A^\top X$.

\item Let $\alpha$ be a basis matrix for the central subspace $\mathcal{S}_{Y \mid X}$. Then
\[
\mathcal{J}(\alpha) \ge \mathcal{J}(A),
\]
with equality if and only if $\operatorname{span}(A) = \operatorname{span}(\alpha)$.

\item If $q_1 < q_2 < d$, where $d = \dim(\mathcal{S}_{Y \mid X})$, then
\[
\mathcal{J}(I_p) = \mathcal{J}(\alpha) > \max \mathcal{J}(A_{q_2}) > \max \mathcal{J}(A_{q_1}),
\]
where the maxima are taken over all $p \times q_j$ matrices $A_{q_j}$.
\end{enumerate}
\end{lemma}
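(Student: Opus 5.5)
The whole argument rests on one identity, a chain rule for $\mathcal{J}$. If $\operatorname{span}(A)\subseteq\operatorname{span}(A_1)$, then $A^\top X$ is a measurable (linear) function of $A_1^\top X$. The tower property then gives
\[
\mathcal{J}(A_1)-\mathcal{J}(A)={\rm E}\left[\log\frac{f(Y\mid A_1^\top X)}{f(Y\mid A^\top X)}\right]
={\rm E}\Bigl[\mathrm{KL}\bigl\{f(\cdot\mid A_1^\top X)\,\|\,f(\cdot\mid A^\top X)\bigr\}\Bigr],
\]
which is the conditional mutual information $I(Y;A_1^\top X\mid A^\top X)$. It is nonnegative and vanishes if and only if $Y\indep A_1^\top X\mid A^\top X$.

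Property 1 follows directly from this identity. If the two spans are equal, $A^\top X$ and $A_1^\top X$ generate the same $\sigma$-field, so the conditional densities coincide and $\mathcal{J}(A)=\mathcal{J}(A_1)$. Property 2 is the identity with $A_1=I_p$, since $Y\indep X\mid A^\top X$ is exactly the vanishing condition. For property 3, I will read $\mathcal{J}$ as centered by the constant ${\rm E}\{\log f(Y)\}$, as in Section 2. Without this centering the claim $\mathcal{J}(A)\ge 0$ is false for general densities. Taking $A=0$ in the identity then gives $I(Y;A^\top X)\ge 0$, with equality if and only if $Y\indep A^\top X$.

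For property 4, the central subspace is a DRS, so property 2 gives $\mathcal{J}(\alpha)=\mathcal{J}(I_p)$. Property 1 with $A_1=I_p$ gives $\mathcal{J}(A)\le\mathcal{J}(I_p)$ for every $A$. By property 2, equality holds if and only if $\operatorname{span}(A)$ is a DRS, which by minimality of $\mathcal{S}_{Y\mid X}$ forces $\operatorname{span}(A)\supseteq\mathcal{S}_{Y\mid X}$. The stated ``if and only if $\operatorname{span}(A)=\operatorname{span}(\alpha)$'' therefore needs $A$ to have at most $d$ columns, which is the setting of (\ref{eq:population}). I will state that restriction explicitly; for instance $A=I_p$ attains equality but has a larger span. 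Under the restriction, $\operatorname{span}(A)\supseteq\mathcal{S}_{Y\mid X}$ together with $\dim\operatorname{span}(A)\le d$ gives equality of the spans, and this is also what yields the Fisher consistency of $B_0$.

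For property 5, $\mathcal{J}$ depends only on the subspace, so each maximum can be taken over the compact Grassmannian. I will assume $\mathcal{J}$ is continuous there, so that the maxima are attained. Any $q_2$-dimensional span with $q_2<d$ cannot contain $\mathcal{S}_{Y\mid X}$, so by property 4, $\max\mathcal{J}(A_{q_2})<\mathcal{J}(\alpha)$.

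The main obstacle is the strict inequality $\max\mathcal{J}(A_{q_2})>\max\mathcal{J}(A_{q_1})$. Let $A^\ast$ attain the $q_1$-maximum. I will try to enlarge $\operatorname{span}(A^\ast)$ by $q_2-q_1$ directions taken from $\mathcal{S}_{Y\mid X}$ and show the gain is positive. Write $\alpha=(\alpha_1,\dots,\alpha_d)$ and add its columns one at a time to $A^\ast$. Since $[A^\ast,\alpha]$ is a DRS, the chain rule telescopes to
\[
\mathcal{J}(I_p)-\mathcal{J}(A^\ast)=\sum_{k} I\bigl(Y;\alpha_k^\top X\mid A^{\ast\top}X,\alpha_1^\top X,\dots,\alpha_{k-1}^\top X\bigr)>0.
\]
At least one term is positive, but that term need not be the first, and conditional dependence on the full vector does not by itself give dependence on a single added direction.

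My first attempt will exploit freedom in the choice of basis. If every single direction $v\in\mathcal{S}_{Y\mid X}$ gave zero gain, that is $Y\indep v^\top X\mid A^{\ast\top}X$ for all such $v$, I would try to derive a contradiction with the minimality of the central subspace. This is the delicate point, because pairwise conditional independence does not imply joint conditional independence.

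If this fails, the fallback uses the fact that $q_2-q_1$ may exceed one. I will choose $A'$ spanning $\operatorname{span}(A^\ast)$ plus enough central-subspace directions to capture a positive term of the telescoping sum, while keeping $\dim A'\le q_2$. If even this cannot be guaranteed, I will state the strict ordering under an added mild regularity assumption: any $q$-dimensional subspace with $q<d$ admits a one-direction extension with positive conditional mutual information.
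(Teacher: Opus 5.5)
Your KL/chain-rule identity is a different route from the paper's. The paper only notes $\mathcal{J}(A)=\mathcal{I}(A)+{\rm E}\{\log f(Y)\}$ and defers every part to Proposition 1 of \cite{yin2008successive}. Your identity instead gives a self-contained proof of parts 1--4 and of $\mathcal{J}(I_p)=\mathcal{J}(\alpha)>\max\mathcal{J}(A_{q_2})$.

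Both of your corrections are right, and the paper's one-line reduction glosses over them. Nonnegativity in part 3 is not preserved when the constant ${\rm E}\{\log f(Y)\}$ is added. The correct statement is $\mathcal{J}(A)\ge{\rm E}\{\log f(Y)\}$, with equality iff $Y\indep A^\top X$. Likewise, the ``only if'' direction of part 4 fails (e.g.\ $A=I_p$) unless $A$ is restricted to at most $d$ columns.

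The genuine gap is the strict inequality $\max\mathcal{J}(A_{q_2})>\max\mathcal{J}(A_{q_1})$, which you leave unproved. Your telescoping fallback does not by itself guarantee a positive gain within the $q_2-q_1$ columns you may add. The proposed extra ``one-direction extension'' assumption is unnecessary.

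The missing idea is a conditional Cram\'er--Wold argument. It is why your ``pairwise versus joint'' worry does not apply here: you have conditional independence for every linear combination of $X$, not just for a few fixed variables.

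Let $W=A^{\ast\top}X$ and suppose $\mathcal{J}([A^\ast,v])=\mathcal{J}(A^\ast)$ for all $v\in\mathbb{R}^p$. By your identity, $Y\indep v^\top X\mid W$ for every $v$. Hence ${\rm E}\{e^{\mathrm{i}t^\top X}\mid Y,W\}={\rm E}\{e^{\mathrm{i}t^\top X}\mid W\}$ almost surely for each $t$. Working with regular conditional distributions, this holds simultaneously for a countable dense set of $t$ off a single null set. Continuity of characteristic functions then gives equality of the conditional laws of $X$ given $(Y,W)$ and given $W$, i.e.\ $Y\indep X\mid W$. Thus $\operatorname{span}(A^\ast)$ is a DRS of dimension at most $q_1<d$, contradicting the minimality of $\mathcal{S}_{Y\mid X}$.

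So some single $v$ gives $\mathcal{J}([A^\ast,v])>\mathcal{J}(A^\ast)$. Padding $[A^\ast,v]$ with $q_2-q_1-1$ further columns cannot lower $\mathcal{J}$, by part 1, and this yields the strict inequality. The only assumption needed beyond this is attainment of the maxima, which you already impose.

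Your first attempt, with $v$ restricted to $\mathcal{S}_{Y\mid X}$, also goes through. Cram\'er--Wold gives $Y\indep\alpha^\top X\mid W$. Contraction with $Y\indep X\mid(\alpha^\top X,W)$, obtained from $Y\indep X\mid\alpha^\top X$ by weak union, then gives $Y\indep X\mid W$.
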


\noindent {\bf Proof}: \cite{yin2008successive} defined information index 
$$\mathcal{I}(A)={\rm E}\left\{ \log\frac{f(Y | A^\top X)}{f(Y)} \right\},$$
which coincides with the mutual information between $Y$ and $A^\top X$. It follows that 
$$\mathcal{J}(A)=\mathcal{I}(A)+{\rm E}\left\{ \log{f(Y)} \right\}.$$
Since ${\rm E}\{\log f(Y)\}$ does not depend on $A$, the ordering and maximization properties of $\mathcal{J}(A)$ follow directly from Proposition 1 of \cite{yin2008successive}, where the corresponding results are established for $\mathcal{I}(A)$. 
 \qed

Part 4 of Lemma \ref{lem:order} implies that searching over $p \times d$ matrices yields a basis for the central subspace (CS). Part 5 further suggests that the CS can be recovered at the population level via multi-dimensional optimization. In particular, by considering successively larger subspaces, a basis for the CS is obtained when the maximum is attained for the first time.

Let $\mathcal V_{p,d}
=
\{ B \in \mathbb{R}^{p\times d} : B^T B=I_d\}$ denote the Stiefel manifold. The Fisher consistency of the information index maximizer follows directly from Lemma \ref{lem:order}. Therefore we state the result  without proof.  

\begin{theorem}[Fisher consistency]
\label{thm:fisher}
For $B_0=\arg\max_{B \in \mathcal V_{p,d}} \mathcal{J}(B)$ defined in (\ref{eq:population}),  
we have $\operatorname{span}(B_0)=\mathcal{S}_{Y|X}$.
\end{theorem}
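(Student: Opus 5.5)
The plan is to deduce Theorem~\ref{thm:fisher} directly from Part~4 of Lemma~\ref{lem:order}. The idea is to show that the maximum of $\mathcal{J}$ over $\mathcal V_{p,d}$ is attained at an orthonormal basis of $\mathcal{S}_{Y|X}$. Then, because equality in Part~4 forces the spans to coincide, every maximizer must span the central subspace.

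\emph{Step 1 (existence of a feasible maximizer).} Let $\alpha\in\mathbb{R}^{p\times d}$ be any basis matrix of $\mathcal{S}_{Y|X}$; this has $d$ columns since $d=\dim(\mathcal{S}_{Y|X})$. Orthonormalize it, for instance by Gram--Schmidt, to get $\alpha_0=\alpha C$ with $C$ invertible and $\alpha_0^\top\alpha_0=I_d$. Then $\alpha_0\in\mathcal V_{p,d}$ and $\operatorname{span}(\alpha_0)=\operatorname{span}(\alpha)$. Part~1 of Lemma~\ref{lem:order} gives $\mathcal{J}(\alpha_0)=\mathcal{J}(\alpha)$.

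\emph{Step 2 (upper bound and attainment).} For any $B\in\mathcal V_{p,d}$, Part~4 of Lemma~\ref{lem:order} with $A=B$ gives $\mathcal{J}(B)\le\mathcal{J}(\alpha)=\mathcal{J}(\alpha_0)$. Hence $\sup_{B\in\mathcal V_{p,d}}\mathcal{J}(B)=\mathcal{J}(\alpha)$, and this supremum is attained at $\alpha_0$. The $\arg\max$ in (\ref{eq:population}) is therefore nonempty.

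\emph{Step 3 (identification).} Let $B_0$ be any maximizer. Then $\mathcal{J}(B_0)=\mathcal{J}(\alpha)$, so the equality case of Part~4 yields $\operatorname{span}(B_0)=\operatorname{span}(\alpha)=\mathcal{S}_{Y|X}$. The result holds for every element of the $\arg\max$, so it does not matter that $B_0$ is only determined up to right multiplication by a $d\times d$ orthogonal matrix, which is the invariance noted after (\ref{eq:mi}).

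The only delicate point is the equality clause of Part~4, which Lemma~\ref{lem:order} inherits from Proposition~1 of \cite{yin2008successive} through the identity $\mathcal{J}=\mathcal{I}+{\rm E}\{\log f(Y)\}$. Here we simply invoke it. If one wanted to verify it for $p\times d$ matrices, the argument would run as follows. Equality $\mathcal{J}(B)=\mathcal{J}(\alpha)=\mathcal{J}(I_p)$ implies $Y\indep X\mid B^\top X$ by Part~2, so $\operatorname{span}(B)$ is a DRS. Since the central subspace is the intersection of all DRS's, $\mathcal{S}_{Y|X}\subseteq\operatorname{span}(B)$. Both spaces have dimension $d$ (here $B^\top B=I_d$ ensures full column rank), so they are equal. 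This uses the standing assumption that the CS exists and is itself a DRS, and that the expectations defining $\mathcal{J}$ are finite.
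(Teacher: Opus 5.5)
Your proposal is correct and takes the same route as the paper, which gives no proof and only says that Theorem~\ref{thm:fisher} follows directly from Part~4 of Lemma~\ref{lem:order}; your Steps~1--3 (orthonormalizing a basis of $\mathcal{S}_{Y|X}$, showing the maximum over $\mathcal V_{p,d}$ is attained there, and invoking the equality case) supply the details that assertion leaves implicit. Keep your supplementary check of the equality clause via Part~2 and the minimality of the central subspace: Part~4 as literally stated for arbitrary $A$ is too strong (for $A=I_p$ with $d<p$ one has $\mathcal{J}(I_p)=\mathcal{J}(\alpha)$ but $\operatorname{span}(I_p)\neq\mathcal{S}_{Y|X}$), and your check proves exactly the full-column-rank $p\times d$ case the theorem needs.
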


\section{Consistency of the FlowSDR estimator}
\label{sec:flowsdr consistency}

Throughout the main text, $\Theta$ denotes a generic conditional flow class. For asymptotic analysis, we consider a sequence of sieve spaces $\{\Theta_n\}$ that increases in richness with sample size $n$. 

\begin{lemma}[Uniform convergence over sieve class]
\label{lem:ulln_sieve}

Let $\Theta_n$ denote the conditional flow class at sample size $n$. 
For each $B \in \mathcal V_{p,d}$, define
\begin{equation}
\label{eq:sieve1}
M_n(B)
=
\sup_{\theta \in \Theta_n}
\frac{1}{n}
\sum_{i=1}^n
\log f_\theta(Y_i | B^\top X_i),
\end{equation}
and its population counterpart
\begin{equation}
\label{eq:sieve2}
M_n^{\mathrm{pop}}(B)
=
\sup_{\theta \in \Theta_n}
{\rm E}\bigl\{\log f_\theta(Y | B^\top X)\bigr\}.
\end{equation}

Assume that the class
\[
\bigl\{
(x,y) \mapsto \log f_\theta(y | B^\top x)
:
\theta \in \Theta_n,\;
B \in \mathcal V_{p,d}
\bigr\}
\]
satisfies a uniform law of large numbers (ULLN). Then
\[
\sup_{B \in \mathcal V_{p,d}}
\left|
M_n(B) - M_n^{\mathrm{pop}}(B)
\right|
\overset{p}{\longrightarrow} 0.
\]

\end{lemma}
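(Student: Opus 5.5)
The argument is the standard observation that a supremum is $1$-Lipschitz with respect to the sup-norm. We then apply the assumed ULLN directly. First we fix notation for the empirical and population criteria at a pair $(B,\theta)$:
\[
Q_n(B,\theta)=\frac1n\sum_{i=1}^n \log f_\theta(Y_i | B^\top X_i),\qquad
Q(B,\theta)={\rm E}\bigl\{\log f_\theta(Y | B^\top X)\bigr\}.
\]
With this notation, $M_n(B)=\sup_{\theta\in\Theta_n}Q_n(B,\theta)$ and $M_n^{\mathrm{pop}}(B)=\sup_{\theta\in\Theta_n}Q(B,\theta)$. We interpret the ULLN hypothesis in the sieve sense:
\[
\Delta_n:=\sup_{B\in\mathcal V_{p,d}}\sup_{\theta\in\Theta_n}\bigl|Q_n(B,\theta)-Q(B,\theta)\bigr|\overset{p}{\longrightarrow}0 .
\]
For this to make sense, $Q(B,\theta)$ must be finite. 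We therefore assume, as is implicit in the ULLN, that the class has an integrable envelope, or at least that each $\log f_\theta(y | B^\top x)$ is integrable.

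The key step is a deterministic inequality, proved for each fixed $B$. For any $\theta\in\Theta_n$ we have $Q_n(B,\theta)\le Q(B,\theta)+\Delta_n\le M_n^{\mathrm{pop}}(B)+\Delta_n$. Taking the supremum over $\theta$ gives $M_n(B)\le M_n^{\mathrm{pop}}(B)+\Delta_n$. The same argument with the roles of $Q_n$ and $Q$ exchanged gives $M_n^{\mathrm{pop}}(B)\le M_n(B)+\Delta_n$. This avoids any need for the suprema to be attained: we only work with $\theta$ inside the supremum, so approximate maximizers are never required. Hence $|M_n(B)-M_n^{\mathrm{pop}}(B)|\le\Delta_n$ for every $B$.

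Since the right-hand side does not depend on $B$, taking the supremum over $\mathcal V_{p,d}$ yields
\[
\sup_{B\in\mathcal V_{p,d}}|M_n(B)-M_n^{\mathrm{pop}}(B)|\le\Delta_n\overset{p}{\longrightarrow}0 .
\]

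The main point requiring care is that both suprema are finite, so that the subtraction is legitimate. We would verify this from the ULLN itself: finiteness of $\Delta_n$ together with finiteness of $M_n^{\mathrm{pop}}$ forces $M_n$ to be finite, and vice versa. If needed, we would explicitly assume $\sup_{\theta\in\Theta_n} Q(B,\theta)<\infty$; this holds for instance when the flow densities are uniformly bounded on the sieve.

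A secondary point is measurability of the supremum over an uncountable class. We would handle it with outer probability in the usual empirical-process manner, or by noting that $\theta\mapsto\log f_\theta$ is continuous for RQS flows with neural-network parameters, so that the suprema can be restricted to a countable dense subset.
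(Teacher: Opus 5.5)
Your proposal is correct and takes the same approach as the paper. The paper also uses the elementary bound $|\sup_\theta a_{n,\theta}(B)-\sup_\theta a_\theta(B)|\le\sup_\theta|a_{n,\theta}(B)-a_\theta(B)|$ for each fixed $B$, then takes the supremum over $B$ to reduce to the assumed ULLN; your added remarks on finiteness of the suprema and measurability address points the paper leaves implicit.
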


\noindent {\bf Proof}:
Let $\mathbb P_n$ denote the empirical measure such that
$\mathbb P_n g = \frac{1}{n}\sum_{i=1}^n g(X_i,Y_i)$. 
Let $\mathbb P g = {\rm E}\{g(X,Y)\}$ denote the corresponding population expectation. Note that $\log f_\theta(Y | B^\top X)$ is a measurable function of $(X,Y)$, so the uniform law of large numbers applies directly to this function class despite the conditional form of the density.

For any fixed $B\in\mathcal V_{p,d}$, let
\[
a_{n,\theta}(B)=
\mathbb P_n \log f_\theta(Y| B^\top X) 
\mbox{ and }
a_{\theta}(B)=
\mathbb P \log f_\theta(Y| B^\top X).
\]
Then
\[
M_n(B)=\sup_{\theta\in\Theta_n} a_{n,\theta}(B)
\mbox{ and }
M_n^{\mathrm{pop}}(B)=\sup_{\theta\in\Theta_n} a_{\theta}(B).
\]
Using the elementary inequality
\[
\left|
\sup_{\theta\in\Theta_n} a_{n,\theta}(B)
-
\sup_{\theta\in\Theta_n} a_{\theta}(B)
\right|
\le
\sup_{\theta\in\Theta_n}
\left|
a_{n,\theta}(B)-a_{\theta}(B)
\right|,
\]
we obtain
\[
\sup_{B\in\mathcal V_{p,d}}
\left|
M_n(B)-M_n^{\mathrm{pop}}(B)
\right|
\le
\sup_{\substack{B\in\mathcal V_{p,d}\\ \theta\in\Theta_n}}
\left|
\mathbb P_n \log f_\theta(Y| B^\top X)
-
\mathbb P\log f_\theta(Y| B^\top X)
\right|.
\]
The right-hand side converges to zero in probability by the assumed uniform law of large numbers. Hence we get the desired result
$\sup_{B\in\mathcal V_{p,d}}
\left|
M_n(B)-M_n^{\mathrm{pop}}(B)
\right|
\overset{p}{\longrightarrow}0$.
\qed

We remark that the ULLN assumption in Lemma~\ref{lem:ulln_sieve} is satisfied by conditional rational-quadratic spline flow classes under standard sieve restrictions, including uniformly bounded network weights, spline derivatives, knot locations, and tail behavior, together with suitably controlled growth of model complexity with $n$. See, for example, \citet{kobyzev2020normalizing} and \citet{irons2022triangular}. 
We further note that $M_n(B)$ in (\ref{eq:sieve1}) coincides with $M_n(B)$ in (\ref{eq:sample profile}) of the main text, while $M_n^{\mathrm{pop}}(B)$ in (\ref{eq:sieve2}) corresponds to the population profile criterion in (\ref{eq:population profile}).

\begin{lemma}[Uniform approximation of the population target]
\label{lem:bridge}
For the  information index $\mathcal{J}(B)={\rm E}\left\{ \log{f(Y | B^\top X)} \right\}$ 
defined in (\ref{eq:mi}) and
the population profile log-likelihood  $M_n^{\mathrm{pop}}(B)$ in (\ref{eq:sieve2}),
assume
 that the sieve approximation error satisfies
\begin{equation}
\label{eq:sieve3}
\sup_{B \in \mathcal V_{p,d}}
\left|
M_n^{\mathrm{pop}}(B) - \mathcal J(B)
\right|
\le \varepsilon_n,
\mbox{ for some }
\varepsilon_n \to 0.
\end{equation}
In addition, assume the ULLN assumption in Lemma~\ref{lem:ulln_sieve} is satisfied. Then
\[
\sup_{B \in \mathcal V_{p,d}}
\left|
M_n(B) - \mathcal J(B)
\right|
\overset{p}{\longrightarrow} 0.
\]

\end{lemma}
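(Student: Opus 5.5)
The plan is a triangle-inequality decomposition that splits the error into a stochastic part and a deterministic approximation part. For every $B\in\mathcal V_{p,d}$ we write
\[
\left|M_n(B)-\mathcal J(B)\right|
\le
\left|M_n(B)-M_n^{\mathrm{pop}}(B)\right|
+
\left|M_n^{\mathrm{pop}}(B)-\mathcal J(B)\right|.
\]
Taking the supremum over $B$ on both sides, and using that the supremum of a sum is at most the sum of the suprema, gives
\[
\sup_{B\in\mathcal V_{p,d}}\left|M_n(B)-\mathcal J(B)\right|
\le
\sup_{B\in\mathcal V_{p,d}}\left|M_n(B)-M_n^{\mathrm{pop}}(B)\right|
+\varepsilon_n .
\]
Here the second term is bounded by the sieve approximation condition (\ref{eq:sieve3}).

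We then control each term. The first term on the right is the estimation error. Under the assumed ULLN it converges to zero in probability, which is exactly the conclusion of Lemma~\ref{lem:ulln_sieve}. The second term is deterministic and tends to zero by assumption, since $\varepsilon_n\to 0$.

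To combine them, fix $\eta>0$ and choose $N$ such that $\varepsilon_n<\eta/2$ for all $n\ge N$. For such $n$,
\[
\Pr\Bigl(\sup_{B}\left|M_n(B)-\mathcal J(B)\right|>\eta\Bigr)
\le
\Pr\Bigl(\sup_{B}\left|M_n(B)-M_n^{\mathrm{pop}}(B)\right|>\eta/2\Bigr)\to 0,
\]
which is the claimed convergence in probability.

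The only point that needs care is measurability of the suprema over the uncountable index set $\mathcal V_{p,d}\times\Theta_n$. I would handle it either by using outer probability, in the usual empirical-process sense, or by noting that $B\mapsto M_n(B)$ is continuous on the compact manifold $\mathcal V_{p,d}$, so the supremum can be taken over a countable dense subset. Beyond this, I expect no substantive obstacle, because all of the analytic difficulty sits in the assumed ULLN and the sieve approximation condition (\ref{eq:sieve3}).
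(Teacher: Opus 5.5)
Your proof is correct and matches the paper's argument: the same triangle-inequality split into the ULLN term (handled by Lemma~\ref{lem:ulln_sieve}) and the deterministic term bounded by $\varepsilon_n$, with your explicit $\eta/2$ step spelling out what the paper leaves implicit. On measurability, your second option assumes continuity of $B\mapsto M_n(B)$ (and of $\mathcal J$), which the stated hypotheses do not give, so the outer-probability option is the one that works as stated; the paper does not address this point.
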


\noindent {\bf Proof}:
By the triangle inequality,
\[
\begin{aligned}
\sup_{B\in\mathcal V_{p,d}}
\left|M_n(B)-\mathcal J(B)\right|
&\le
\sup_{B\in\mathcal V_{p,d}}
\left|M_n(B)-M_n^{\mathrm{pop}}(B)\right| +
\sup_{B\in\mathcal V_{p,d}}
\left|M_n^{\mathrm{pop}}(B)-\mathcal J(B)\right|.
\end{aligned}
\]
The first term converges to zero in probability by Lemma~\ref{lem:ulln_sieve}, while the second
term is bounded by $\varepsilon_n=o(1)$. Therefore we get the desired result
$\sup_{B\in\mathcal V_{p,d}}
\left|M_n(B)-\mathcal J(B)\right|
\overset{p}{\longrightarrow}0$.
\qed

We remark that the approximation in (\ref{eq:rich class1}) of the main text is formally characterized by the uniform sieve approximation condition (\ref{eq:sieve3}). 
This condition is plausible for conditional rational-quadratic spline flow classes, since monotone rational-quadratic splines provide flexible one-dimensional density transforms, and neural networks can approximate the dependence of spline parameters on the conditioning variable $B^\top X$. 
However, this condition is not automatic. It requires standard regularity conditions on the true conditional density, as well as compactness or appropriate tail control, bounded spline parameters, and sufficiently rich yet controlled sieve growth. 
We therefore impose it as a sieve approximation assumption and leave its formal verification for future work.

For \(A,B\in\mathcal V_{p,d}\), define
$$\Delta(A,B)=\|AA^\top-BB^\top\|_{\rm F},$$
where $\|\,\cdot\,\|_{\rm F}$ denotes the Frobenius norm. 
This metric depends only on the column spaces of \(A\) and \(B\), and therefore quantifies the distance between the subspaces \(\operatorname{span}(A)\) and \(\operatorname{span}(B)\).

\begin{theorem}[Consistency of the estimated central subspace]
\label{thm:subspace_consistency}
Suppose $\mathcal{S}_{Y|X}$ exists and has structural dimension $d$. 
Assume that the uniform law of large numbers (ULLN) condition in Lemma~\ref{lem:ulln_sieve} and  the uniform sieve approximation
condition in Lemma~\ref{lem:bridge} hold. 
Then 
$$\Delta(\hat B,B_0)
\overset{p}{\longrightarrow}0,$$
where $\hat B
=
\arg\max_{B\in\mathcal V_{p,d}} M_n(B)$ is defined in (\ref{eq:sample}) and  $B_0=\arg\max_{B \in \mathcal V_{p,d}} \mathcal{J}(B)$ is defined in (\ref{eq:population}). 
\end{theorem}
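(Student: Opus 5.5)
This is a standard argmax-consistency argument on the compact quotient of the Stiefel manifold by $O(d)$, i.e. the Grassmannian. Both $\mathcal J$ and $M_n$ depend on $B$ only through $\operatorname{span}(B)$, so I will work with the metric $\Delta$. The ingredients are:
\begin{itemize}
\item uniform convergence of $M_n$ to $\mathcal J$ from Lemma~\ref{lem:bridge};
\item identifiability of the maximizing subspace from Lemma~\ref{lem:order} (part 4) together with Theorem~\ref{thm:fisher};
\item a well-separated-maximum condition, obtained from compactness and continuity.
\end{itemize}

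\textbf{Step 1 (the estimator nearly maximizes $\mathcal J$).} Write $\delta_n=\sup_{B\in\mathcal V_{p,d}}|M_n(B)-\mathcal J(B)|$, so $\delta_n\overset{p}{\to}0$ by Lemma~\ref{lem:bridge}. Since $\hat B$ maximizes $M_n$,
\[
\mathcal J(\hat B)\ \ge\ M_n(\hat B)-\delta_n\ \ge\ M_n(B_0)-\delta_n\ \ge\ \mathcal J(B_0)-2\delta_n .
\]
Hence $0\le \mathcal J(B_0)-\mathcal J(\hat B)\le 2\delta_n\overset{p}{\to}0$. If $\hat B$ is only an approximate maximizer up to $o_p(1)$, the same chain holds with an extra $o_p(1)$ term.

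\textbf{Step 2 (well-separated maximum).} I need, for every $\eta>0$,
\[
\kappa(\eta)=\mathcal J(B_0)-\sup\{\mathcal J(B):B\in\mathcal V_{p,d},\ \Delta(B,B_0)\ge\eta\}>0 .
\]
The set $\{B\in\mathcal V_{p,d}:\Delta(B,B_0)\ge\eta\}$ is closed in the compact manifold $\mathcal V_{p,d}$, so it is compact. If $\mathcal J$ is continuous on $\mathcal V_{p,d}$, the supremum is attained at some $B^*$ with $\operatorname{span}(B^*)\ne\mathcal S_{Y|X}$. Lemma~\ref{lem:order}(4) gives strict inequality $\mathcal J(B^*)<\mathcal J(B_0)$, so $\kappa(\eta)>0$.

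\textbf{Step 3 (conclusion).} On the event $\{\Delta(\hat B,B_0)\ge\eta\}$ we have $\mathcal J(B_0)-\mathcal J(\hat B)\ge\kappa(\eta)$. Combined with Step 1, this gives
\[
P\{\Delta(\hat B,B_0)\ge\eta\}\le P\{2\delta_n\ge\kappa(\eta)\}\to0 .
\]
This is the claimed consistency. Non-uniqueness of $B_0$ causes no trouble: any two maximizers span $\mathcal S_{Y|X}$ by Theorem~\ref{thm:fisher}, so $\Delta(\hat B,B_0)$ is well defined.

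\textbf{Main obstacle: continuity of $\mathcal J$ in Step 2.} The map $B\mapsto \mathrm E\log f(Y|B^\top X)$ need not be continuous without regularity. I would first try to show it under mild conditions: the joint density of $(X,Y)$ is continuous, and $\log f(Y|B^\top X)$ is dominated by an integrable envelope uniformly in $B$. Then $f(y|B^\top x)$ is continuous in $B$, and dominated convergence gives continuity of $\mathcal J$. Failing that, I would avoid continuity altogether. The ULLN plus sieve condition make $\mathcal J$ a uniform limit of $M_n^{\mathrm{pop}}$, and each $M_n^{\mathrm{pop}}$ is upper semicontinuous whenever $\log f_\theta(y|B^\top x)$ is continuous in $B$ with an integrable bound. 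An upper semicontinuous function attains its supremum on a compact set, which is all Step 2 needs. In either route, I would state this regularity explicitly as part of the assumptions being invoked.
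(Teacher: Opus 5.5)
Your argument is essentially the paper's: uniform convergence of $M_n$ to $\mathcal J$ from Lemma~\ref{lem:bridge}, subspace-level identifiability from part~4 of Lemma~\ref{lem:order}, and a well-separated maximum on the compact set $\{B\in\mathcal V_{p,d}:\Delta(B,B_0)\ge\eta\}$. You are also right that the strict gap there needs (upper semi)continuity of $\mathcal J$, which the paper simply asserts from uniqueness plus compactness; one caution on your fallback route, though: a supremum over $\theta$ of functions continuous in $B$ is lower, not upper, semicontinuous, so upper semicontinuity of $M_n^{\mathrm{pop}}$ needs something extra, such as compactness of $\Theta_n$ with joint continuity in $(\theta,B)$, or equicontinuity in $B$ uniformly over $\theta$.
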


\noindent {\bf Proof}:
By Lemma~\ref{lem:bridge},
\[
\sup_{B\in\mathcal V_{p,d}}
\left|M_n(B)-\mathcal J(B)\right|
\overset{p}{\longrightarrow}0.
\]

Part 4 of Lemma \ref{lem:order} suggests that \(\mathcal J(B)\) has a unique maximizer among $B\in\mathcal V_{p,d}$, which is denoted as $B_0$. 
For any \(\epsilon>0\), define
\[
\mathcal A_\epsilon
=
\{B\in\mathcal V_{p,d}:\Delta(B,B_0)\ge \epsilon\}.
\]
The set \(\mathcal A_\epsilon\) is compact and does not contain $B_0$. By the subspace-level uniqueness of the maximizer,
\[
\sup_{B\in\mathcal A_\epsilon}\mathcal J(B)
<
\mathcal J(B_0).
\]
Let
\[
\eta_\epsilon
=
\mathcal J(B_0)
-
\sup_{B\in\mathcal A_\epsilon}\mathcal J(B)
>0.
\]

On the event
\[
\sup_{B\in\mathcal V_{p,d}}
\left|M_n(B)-\mathcal J(B)\right|
<
\eta_\epsilon/3,
\]
we have
\[
M_n(B_0)
>
\mathcal J(B_0)-\eta_\epsilon/3
\]
and, for every \(B\in\mathcal A_\epsilon\),
\[
M_n(B)
<
\mathcal J(B)+\eta_\epsilon/3
\le
\mathcal J(B_0)-\eta_\epsilon+\eta_\epsilon/3
=
\mathcal J(B_0)-2\eta_\epsilon/3.
\]
Therefore,
\[
M_n(B_0)
>
\sup_{B\in\mathcal A_\epsilon}M_n(B).
\]
Thus any maximizer \(\hat B\arg\max_{B\in\mathcal V_{p,d}}M_n(B)\) cannot belong to
\(\mathcal A_\epsilon\). Hence,
\[
\Pr\{\Delta(\hat B,B_0)\ge \epsilon\}
\le
\Pr\left\{
\sup_{B\in\mathcal V_{p,d}}
|M_n(B)-\mathcal J(B)|
\ge
\eta_\epsilon/3
\right\}
\longrightarrow 0.
\]
Since this holds for every \(\epsilon>0\), we conclude that
$\Delta(\hat B,B_0)
\overset{p}{\longrightarrow}0$.
\qed

\section{Population-level comparisons with the benchmark methods}
\label{sec:population comparison}

BENN-SDR \citep{tang2026belted} considers a parametric family of functions
\[
\mathcal{G}
=
\{ g(\cdot,t) : t \in \mathcal{C} \},
\]
where $\mathcal{C}$ is a subset of  $\mathbb{R}$. For example, $\mathcal{G}_m
=
\{ g(y,t)=y^t : t =1,2,\ldots,m \}$ defines the first $m$ moment functions. With the ensemble $\mathcal{G}$, the population-level linear BENN-SDR solves
\[
\min_{B,h}
\int_{\mathcal{C}}
{\rm E}\left\{
g(Y,t)-h_t(B^\top X)
\right\}^2
\, d\mu(t),
\]
where $h_t(\cdot)$ belongs to a sufficiently rich function class, and $\mu$ is a finite measure on ${\mathcal{C}}$.  At the sample level, BENN-SDR models $h_t(\cdot)$ with a standard feedforward multilayer perceptron (MLP). In the special case where $\mathcal{G}=\mathcal{G}_m$ with $m=1$, BENN-SDR reduces to NN-SDR of
\cite{kapla2022fusing}. 

When we choose $\mathcal{G}_m$ with $m>1$,  
BENN employs a feedforward MLP with a low-dimensional bottleneck layer and an $m$-dimensional output layer corresponding to the ensemble functions,
so that  $\{h_t(\cdot)\}_{t=1}^m$
can be learned simultaneously. In this case, BENN targets the moment-based dimension reduction problem
$${\rm E}\{(Y,Y^2,\ldots,Y^m)\mid X \}={\rm E}\{(Y,Y^2,\ldots,Y^m)\mid  B^\top X \},$$
which was first studied in \cite{yin2002dimension}. 

Although both linear BENN and FlowSDR target the same sufficient predictor $B^\top X$, they differ in the way the conditional distribution is learned. BENN recovers the central subspace through a least-squares regression of an ensemble of transformed responses on $B^\top X$, whereas FlowSDR directly maximizes a conditional likelihood induced by a flexible conditional normalizing flow family.

The population-level  MAVE \citep{xia2002mave} solves $$\min_{B} {\rm E}\left\{
\bigl(Y - E(Y \mid B^\top X)\bigr)^2
\right\},$$ which is equivalent to 
\[
\min_{B} {\rm E}\big\{\operatorname{Var}(Y \mid B^\top X)\big\}.
\]
Under the neural Gaussian SDR  working model (\ref{eq:ng model}), we assume
\begin{equation*}
Y \mid B^\top X = w
\sim
\mathcal{N}\big( \mu(w), \sigma^2(w) \big).
\end{equation*}
Ignoring the constant
$-\log(2\pi)/2$,
the corresponding population log-likelihood becomes
\begin{equation*}
\ell^{\rm G}(B,\mu,\sigma^2)
=
-\frac12
{\rm E}\left\{
\log \sigma^2(B^\top X)
+
\frac{(Y-\mu(B^\top X))^2}
{\sigma^2(B^\top X)}
\right\}.
\end{equation*}
For a fixed \(B\), the population profile log-likelihood is
$$M^{\rm G}(B)=\sup_{\mu,\sigma^2}\ell^{\rm G}(B,\mu,\sigma^2).$$
The population-level  neural Gaussian SDR estimand is then defined as
\begin{equation}
\label{eq:gaussian_population}
{B}^{\rm G}
=
\arg\max_{ B \in\mathcal V_{p,d}} M^{\rm G}(B).
\end{equation}

The next proposition states that the population-level target of neural Gaussian SDR  is closely related to MAVE. In fact,  we can view it as a log conditional variance analogue of MAVE.

\begin{proposition}[Profile Gaussian likelihood]
\label{prop:profile_gaussian}
Assume that \(\operatorname{Var}(Y\mid B^\top X)>0\) almost surely for each
\(B\in\mathcal V_{p,d}\).
The population-level  neural Gaussian SDR estimand 
satisfies
 \[
{B}^{\rm G}
=
\arg\min_{ B \in\mathcal V_{p,d}}  {\rm E}  \big\{
\log\operatorname{Var}(Y \mid B^\top X)\big\}.
\]
\end{proposition}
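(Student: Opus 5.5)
The plan is to profile out $\mu$ and $\sigma^2$ in closed form for each fixed $B$, then read off the result. Write $W=B^\top X$, $m(W)={\rm E}(Y\mid W)$ and $v(W)=\operatorname{Var}(Y\mid W)$. We implicitly assume ${\rm E}(Y^2)<\infty$, so that these quantities exist. The supremum is taken over measurable functions $\mu$ and $\sigma^2>0$; this is the same idealization as (\ref{eq:rich class1}) applied to the Gaussian family.

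\textbf{Step 1: profile out $\mu$.} Condition on $W$ and use the decomposition
\[
{\rm E}\{(Y-\mu(W))^2\mid W\}=v(W)+\{m(W)-\mu(W)\}^2 .
\]
By the tower property,
\[
\ell^{\rm G}(B,\mu,\sigma^2)=-\tfrac12{\rm E}\left[\log\sigma^2(W)+\frac{v(W)+\{m(W)-\mu(W)\}^2}{\sigma^2(W)}\right].
\]
Since $\sigma^2>0$, this is maximized pointwise in $W$ by $\mu=m$, whatever $\sigma^2$ is.

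\textbf{Step 2: profile out $\sigma^2$.} For fixed $w$ with $v(w)>0$, consider the scalar function $s\mapsto \log s+v(w)/s$ on $s>0$. It is strictly convex in $1/s$, and its derivative is $1/s-v/s^2$. Its unique minimizer is therefore $s=v(w)$, with minimum value $\log v(w)+1$.

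The positivity assumption $\operatorname{Var}(Y\mid B^\top X)>0$ a.s. is exactly what makes this minimizer admissible. Without it, the supremum would be $+\infty$ because $\sigma^2\to0$.

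Combining Steps 1 and 2 gives
\[
M^{\rm G}(B)=-\tfrac12\bigl[{\rm E}\{\log\operatorname{Var}(Y\mid B^\top X)\}+1\bigr].
\]

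\textbf{Step 3: justify the pointwise argument.} Taking the supremum inside the expectation needs a short check, and I expect this to be the main point of care. The upper bound is pointwise: for any admissible $(\mu,\sigma^2)$, the integrand satisfies
\[
\log\sigma^2+\frac{v+(m-\mu)^2}{\sigma^2}\ge \log v+1 \quad\text{a.s.}
\]
The bound is attained by the measurable choice $(m,v)$. One should note that ${\rm E}\log v(W)$ is well defined in $[-\infty,\log{\rm E}\,v]$ by Jensen's inequality, with ${\rm E}\,v\le\operatorname{Var}(Y)<\infty$. If it equals $-\infty$ for some $B$, both sides of the identity are $+\infty$ and the ordering statement still holds.

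\textbf{Step 4: conclude.} Since $M^{\rm G}(B)$ is a strictly decreasing affine function of ${\rm E}\{\log\operatorname{Var}(Y\mid B^\top X)\}$, the maximizers of $M^{\rm G}$ over $\mathcal V_{p,d}$ coincide with the minimizers of ${\rm E}\{\log\operatorname{Var}(Y\mid B^\top X)\}$. This is the claim.

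For comparison with MAVE: by Jensen, ${\rm E}\log\operatorname{Var}\le\log{\rm E}\operatorname{Var}$. This explains the "log conditional variance analogue" interpretation, though it is not needed for the proof.
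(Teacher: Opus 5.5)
Your proposal is correct and follows the paper's route: condition on $B^\top X$, profile out $\mu$ by the conditional mean and then $\sigma^2$ by the conditional variance pointwise, which gives $M^{\rm G}(B)=-\tfrac12[{\rm E}\{\log\operatorname{Var}(Y\mid B^\top X)\}+1]$, and then read off the argmin. Your Step 3 (a pointwise lower bound attained by the measurable choice $(m,v)$, together with the $-\infty$ case) and your convexity check add rigor the paper omits: they justify exchanging the supremum and the expectation, and they confirm that the stationary point found by differentiation is a global maximum.
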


\noindent
{\bf Proof.} From the law of iterated expectations, we have
$$\ell^{\rm G}(B,\mu,\sigma^2)
=
-\frac12 {\rm E}\left[
\left.{\rm E}\left\{
\log \sigma^2(B^\top X)
+
\frac{(Y-\mu(B^\top X))^2}
{\sigma^2(B^\top X)}
\right | B^\top X \right\} \right].$$
Conditioning on $B^\top X=w$ and
for fixed $\sigma^2(w)$, the term involving $\mu(w)$ is
\[
-{\rm E}\left
\{(Y-\mu(w))^2
\mid B^\top X=w
\right\}
/
\sigma^2(w),
\]
which is maximized at
\[
\mu_B(w)
=
{\rm E}(Y \mid B^\top X=w).
\]

Substituting this value gives the conditional contribution
\[
-\frac12
\left\{
\log \sigma^2(w)
+
\frac{
\operatorname{Var}(Y \mid B^\top X=w)
}{
\sigma^2(w)
}
\right\}.
\]

For fixed $w$, differentiating with respect to $\sigma^2(w)$
yields
\[
-\frac{1}{\sigma^2(w)}
+
\frac{
\operatorname{Var}(Y \mid B^\top X=w)
}{
\sigma^4(w)
}
=
0,
\]
and the maximizer is 
\[
\sigma_B^2(w)
=
\operatorname{Var}(Y \mid B^\top X=w).
\]

At this optimum, the ratio term equals one. Therefore
$$M^{\rm G}(B)=\sup_{\mu,\sigma^2}\ell^{\rm G}(B,\mu,\sigma^2)=-\frac12
{\rm E} \big\{
\log\operatorname{Var}(Y \mid B^\top X)+1\big\}. $$
The result then follows directly from the definition of ${B}^{\rm G}$ in (\ref{eq:gaussian_population}). \qed

\section{Implementation details}
\label{sec:details}

All experiments are carried out on a single NVIDIA RTX 5090 GPU. 

MAVE and csMAVE are implemented via the $R$ package ``${\rm MAVE}$''. 

NN-SDR and BENN-SDR are implemented following  \cite{tang2026belted} and \href{https://github.com/tyy20/BENN-codes}{BENN-codes GitHub repository}. 

The neural Gaussian SDR model uses a two-hidden-layer multilayer perceptron with architecture
\[
d \rightarrow 64 \xrightarrow{\tanh} 64\xrightarrow{\tanh} 2.
\]
The network maps the reduced predictor $B^\top X \in \mathbb{R}^d$ to the conditional mean and log-standard-deviation of $Y \mid B^\top X$.

For FlowSDR, the model have two parts:
\begin{enumerate}
\item 
\[
X \rightarrow W
\quad \text{with} \quad
W = B^\top X.
\]

\item
\[
(W,Y)\xrightarrow{\rm MLP} Z
\quad \text{with} \quad
Z\sim \mathcal{N}(0,1).
\]
\end{enumerate}

For the MLP in part 2, we use a two-hidden-layer multilayer perceptron with architecture
\[
(d+1)\rightarrow 64\xrightarrow{\tanh} 64\xrightarrow{\tanh}\text{spline parameters}. 
\]

For both neural Gaussian SDR and FlowSDR,
gradient descent is carried out jointly over the neural network parameters and the SDR projection matrix $B$ using the Adam optimizer. Separate learning rates are used for the two parameter groups: the flow network parameters use step size $10^{-3}$, while the projection matrix $B$ uses a larger step size $10^{-2}$. 

After each Adam update, the projection matrix $B$ is orthonormalized through QR decomposition to enforce the Stiefel manifold constraint $B^\top B=I_d$. Specifically, if $\widetilde B$ denotes the unconstrained gradient update, then its QR decomposition
$\widetilde B = QR$
is computed, and the orthonormal factor $Q$ is taken as the updated value of $B$. This operation can be viewed as a retraction step that maps the unconstrained gradient iterate back onto the Stiefel manifold
$\mathcal V_{p,d}$.

For the real data analysis, we summarize the protocol in Table~\ref{tab:utkface-protocol}. 
 We use a single stratified split consisting of 70\% training, 15\% validation, and 15\% test samples, where the stratification is based on age bins,
 and filenames are saved to ensure identical examples across all methods. 
The validation set is used for early stopping and model selection during neural network training, while the test set is reserved exclusively for 
final evaluation.

\begin{table}[H]
  \centering
    \caption{UTKFace data analysis protocol.}
  \label{tab:utkface-protocol}
  \begin{tabular}{p{0.27\linewidth}p{0.63\linewidth}}
    \toprule
    Component & Specification \\
    \midrule
    Dataset & UTKFace face images with continuous age labels \\
    Response & Age in years, parsed from filename \\
    Image preprocessing & Aligned and cropped images resized to $64 \times 64$ \\
    Feature extractor & HOG descriptors concatenated with RGB color histograms \\
    Feature dimension & $X \in \mathbb{R}^{1812}$ \\
    SDR index & $W = B^\top X$ with $B\in \mathbb{R}^{1812\times d} $ \\
    Candidate dimensions & $d \in \{2,4,6,8\}$ \\
    Split & 70\% train, 15\% validation, 15\% test \\
    Split control & Stratify by age bins\\
    Metrics & MAE, RMSE, and NLL  on the test data\\
    \bottomrule
  \end{tabular}
\end{table}

\section{Order determination}
\label{sec:order}

Estimating the structural dimension $d$ is known as order determination in the SDR literature. 
Sequential tests and BIC-type procedures are commonly used for order determination in classical SDR methods. 
However, order determination is substantially more challenging for neural-network-based SDR methods. To the best of our knowledge, there is currently no theoretically grounded procedure for estimating $d$ other than the approach proposed in \cite{yang2025golden}. 

Specifically, among $1\le k\le p$, the estimator of $d$ is defined as the minimizer of the criterion
$$
C(k)
=
{\rm MSE}_{\rm validation}(k)
+
k\times {\rm pen}(n_{\rm validation},n_{\rm train}),
$$
where ${\rm MSE}_{\rm validation}(k)$ denotes the validation MSE obtained under working dimension $k$, and 
${\rm pen}(n_{\rm validation},n_{\rm train})$ is a penalty term depending on both the validation sample size $n_{\rm validation}$ and the training sample size $n_{\rm train}$. 
\cite{yang2025golden} established the consistency of 
$\hat d=\arg\min_k C(k)$,
by deriving precise convergence rates for the corresponding neural-network-based SDR estimator of the central subspace. 

In contrast, while we establish consistency of FlowSDR, we do not derive convergence rates for the estimated central subspace. Consequently, we do not yet have a fully developed penalized-MSE framework for provably consistent order determination. Instead, we evaluate unpenalized MAE and RMSE for order determination. In particular, we use $n_{\rm train}=2000$, $n_{\rm validation}=1000$, and $n_{\rm test}=2000$. The validation data are used for tuning and selecting the fitted neural-network model within each fixed working dimension $k$, while the final criterion (MAE or RMSE) for order determination is evaluated on the test data set.

\begin{table}[H]
  \centering
  \scriptsize
  \setlength{\tabcolsep}{4pt}
\caption{Results of estimating $d=2$ in Model I with $100$ replications} 
  \label{tab:fv-order-cv-normal}
  \begin{tabular}{clcrrrc}
    \toprule
    $p$ & Method & Criterion & $\hat d=1$ & $\hat d=2$ & $\hat d=3$ & $\hat d\ge 2$ \\
    \midrule
    5 & neural Gaussian SDR & MAE & 0 & 64 & 36 & 100 \\
     & neural Gaussian SDR & RMSE & 0 & 58 & 42 & 100 \\
     & FlowSDR & MAE & 0 & 61 & 39 & 100 \\
     & FlowSDR & RMSE & 0 & 56 & 44 & 100 \\
    \addlinespace[2pt]
    10 & neural Gaussian SDR & MAE & 0 & 70 & 30 & 100 \\
     & neural Gaussian SDR & RMSE & 0 & 70 & 30 & 100 \\
     & FlowSDR & MAE & 0 & 68 & 32 & 100 \\
     & FlowSDR & RMSE & 0 & 63 & 37 & 100 \\
    \addlinespace[2pt]
    20 & neural Gaussian SDR & MAE & 0 & 60 & 40 & 100 \\
     & neural Gaussian SDR & RMSE & 0 & 60 & 40 & 100 \\
     & FlowSDR & MAE & 0 & 77 & 23 & 100 \\
     & FlowSDR & RMSE & 0 & 75 & 25 & 100 \\
    \bottomrule
  \end{tabular}
\end{table}

\begin{table}[H]
  \centering
  \scriptsize
  \setlength{\tabcolsep}{4pt}
\caption{Results of estimating $d=2$ in Model II with $100$ replications} 
  \label{tab:fv-order-cv-t3}
  \begin{tabular}{clcrrrc}
    \toprule
    $p$ & Method & Criterion & $\hat d=1$ & $\hat d=2$ & $\hat d=3$ & $\hat d\ge 2$ \\
    \midrule
    5 & neural Gaussian SDR & MAE & 0 & 51 & 49 & 100 \\
     & neural Gaussian SDR & RMSE & 0 & 48 & 52 & 100 \\
     & FlowSDR & MAE & 0 & 72 & 28 & 100 \\
     & FlowSDR & RMSE & 1 & 57 & 42 & 99 \\
    \addlinespace[2pt]
    10 & neural Gaussian SDR & MAE & 0 & 55 & 45 & 100 \\
     & neural Gaussian SDR & RMSE & 0 & 51 & 49 & 100 \\
     & FlowSDR & MAE & 0 & 75 & 25 & 100 \\
     & FlowSDR & RMSE & 2 & 60 & 38 & 98 \\
    \addlinespace[2pt]
    20 & neural Gaussian SDR & MAE & 0 & 56 & 44 & 100 \\
     & neural Gaussian SDR & RMSE & 0 & 50 & 50 & 100 \\
     & FlowSDR & MAE & 0 & 67 & 33 & 100 \\
     & FlowSDR & RMSE & 0 & 66 & 34 & 100 \\
    \bottomrule
  \end{tabular}
\end{table}

From Tables~\ref{tab:fv-order-cv-normal} and~\ref{tab:fv-order-cv-t3}, we see that unpenalized MAE and RMSE tend to overestimate $d$. This is consistent with the findings of \cite{yang2025golden}, where a penalized MSE criterion was shown to yield consistent estimation of $d$. We argue that moderate overestimation of $d$ is not a major concern. On one hand, part~5 of Lemma~\ref{lem:order} suggests that using $\hat d>d$ does not lead to loss of information. On the other hand, neural-network-based methods empirically appear less sensitive to overfitting than many classical statistical methods.

For Model II with heavy-tailed errors, we see from Table~\ref{tab:fv-order-cv-t3} that using RMSE may underestimate $d$, whereas MAE does not appear to suffer from this issue. This is a more serious concern, since the conditional distribution $f(Y| \hat B^\top X)$ can differ substantially from $f(Y| X)$ when $\hat B$ is estimated under $\hat d<d$. We expect that a penalized MSE criterion may encounter similar issues, as penalization tends to favor more parsimonious models. Developing a penalized MAE-type criterion that is robust to heavy-tailed distributions warrants further investigation.

\section{Conformal prediction intervals}
\label{sec:ci}

{\bf Conformal Neural Gaussian prediction intervals.} Let $\{(X_i, Y_i)\}_{i=1}^{n_{\mathrm{cal}}}$ denote an independent calibration set, and let $\hat{\mu}(\cdot)$ and $\hat{\sigma}(\cdot)$ be the fitted mean and standard deviation functions from the neural Gaussian SDR model, together with the estimated central subspace basis $\hat B$.
We define the heteroscedastic conformal scores as
\[
R_i
=
\frac{\big|Y_i - \hat{\mu}(\hat{B}^\top X_i)\big|}
{\hat{\sigma}(\hat{B}^\top X_i)},
\quad i = 1, \dots, n_{\mathrm{cal}}.
\]

Let
\[
\hat{q}_{1-\alpha}
=
\text{the } \left\lceil (n_{\mathrm{cal}} + 1)(1-\alpha) \right\rceil
\text{-th order statistic of } \{R_i\}_{i=1}^{n_{\mathrm{cal}}}.
\]

Then, for a new input $X_{\rm new}$, the conformal neural Gaussian prediction interval at level $1-\alpha$ is given by
\[
\mathcal{C}_{1-\alpha}(X_{\rm new})
=
\left[
\hat{\mu}(\hat{B}^\top X_{\rm new})
-
\hat{q}_{1-\alpha}\,\hat{\sigma}(\hat{B}^\top X_{\rm new}),
\;
\hat{\mu}(\hat{B}^\top X_{\rm new})
+
\hat{q}_{1-\alpha}\,\hat{\sigma}(\hat{B}^\top X_{\rm new})
\right].
\]

Under the assumption of exchangeability, this interval satisfies the finite-sample coverage guarantee
\[
\Pr\big( Y_{\rm{new}} \in \mathcal{C}_{1-\alpha}(X_{\rm{new}}) \big)
\ge 1 - \alpha.
\]

Compared with the Gaussian model-based interval
\[
\hat{\mu}(\hat{B}^\top X_{\rm new})
\pm
z_{1-\alpha/2}\,\hat{\sigma}(\hat{B}^\top X_{\rm new}),
\]
the conformal interval replaces the normal critical value with a data-driven quantile $\hat{q}_{1-\alpha}$, thereby correcting for model misspecification and estimation error.

{\bf Conformal FlowSDR prediction intervals.}
Let $\{(X_i, Y_i)\}_{i=1}^{n_{\mathrm{cal}}}$ denote an independent calibration set, and let $\hat{F}(\,\cdot \mid \hat{B}^\top x)$ denote the fitted conditional distribution function from the FlowSDR model, together with the estimated central subspace basis $\hat B$. Denote the corresponding conditional quantile function by
\[
\hat{Q}_\tau(\hat{B}^\top x)
=
\inf\{y : \hat{F}(y \mid \hat{B}^\top x) \ge \tau\}.
\]

For a target miscoverage level $\alpha \in (0,1)$, define the model-based central interval as
\[
\left[
\hat{Q}_{\alpha/2}(\hat{B}^\top x),
\;
\hat{Q}_{1-\alpha/2}(\hat{B}^\top x)
\right].
\]

To correct for model misspecification, we define conformal scores on the calibration set as
\[
R_i
=
\max\Big\{
\hat{Q}_{\alpha/2}(\hat{B}^\top X_i) - Y_i,
\;
Y_i - \hat{Q}_{1-\alpha/2}(\hat{B}^\top X_i),\; 0
\Big\},
\quad i = 1, \dots, n_{\mathrm{cal}}.
\]

Let
\[
\hat{q}_{1-\alpha}
=
\text{the } \left\lceil (n_{\mathrm{cal}} + 1)(1-\alpha) \right\rceil
\text{-th order statistic of } \{R_i\}_{i=1}^{n_{\mathrm{cal}}}.
\]

Then, for a new input $X_{\rm new}$, the conformal FlowSDR prediction interval is given by
\[
\mathcal{C}_{1-\alpha}(X_{\rm new})
=
\left[
\hat{Q}_{\alpha/2}(\hat{B}^\top X_{\rm new}) - \hat{q}_{1-\alpha},
\;
\hat{Q}_{1-\alpha/2}(\hat{B}^\top X_{\rm new}) + \hat{q}_{1-\alpha}
\right].
\]

Under exchangeability, this interval satisfies the finite-sample coverage guarantee
\[
\Pr\big( Y_{\rm new} \in \mathcal{C}_{1-\alpha}(X_{\rm new}) \big)
\ge 1 - \alpha.
\]

Compared with the neural Gaussian SDR interval, which assumes a parametric Gaussian form, FlowSDR directly models the full conditional distribution and captures non-Gaussian features such as skewness and multimodality. The conformal correction further ensures valid coverage under model misspecification and finite-sample effects.

\end{document}